\documentclass[aps,10pt,reprint,groupedaddress,superscriptaddress, prx]{revtex4-2}

\usepackage{amssymb}
\usepackage{graphicx}
\usepackage{amsmath}
\usepackage{multirow}

\usepackage[bookmarks=true,colorlinks,citecolor=blue,urlcolor=blue]{hyperref}
\usepackage{braket}
\usepackage{babel}
\usepackage{blindtext}
\usepackage[compat=1.1.0]{tikz-feynman}
\usepackage[normalem]{ulem}
\usepackage{physics}

\usepackage{mathtools}

\DeclarePairedDelimiter\ceil{\lceil}{\rceil}

\definecolor{mypurp}{rgb}{0.35, 0, 0.7}

\usepackage{amsthm}
\usepackage{txfonts}
\newtheorem{prop}{Proposition}
\newtheorem{theorem}{Theorem}
\newtheorem{lemma}[theorem]{Lemma}

\theoremstyle{definition}

\begin{document}

\newcommand{\NottinghamA}{\affiliation{School of Physics and Astronomy, University of Nottingham, Nottingham, NG7 2RD, UK}}
\newcommand{\NottinghamB}{\affiliation{Centre for the Mathematics and Theoretical Physics of Quantum Non-Equilibrium Systems, University of Nottingham, Nottingham, NG7 2RD, UK}}
\newcommand{\TUM}{\affiliation{Technical University of Munich, TUM School of Natural Sciences, Physics Department, 85748 Garching, Germany}}
\newcommand{\MCQST}{\affiliation{Munich Center for Quantum Science and Technology (MCQST), Schellingstr. 4, 80799 M{\"u}nchen, Germany}}

\def\papertitle{{Model-Independent Learning of Quantum Phases of Matter \\with Quantum Convolutional Neural Networks}}
\title{\papertitle}

\author{Yu-Jie Liu} \TUM \MCQST
\author{Adam Smith} \NottinghamA \NottinghamB
\author{Michael Knap}  \TUM \MCQST
\author{Frank Pollmann}  \TUM \MCQST


\begin{abstract}

Quantum convolutional neural networks (QCNNs) have been introduced as classifiers for gapped quantum phases of matter. Here, we propose a model-independent protocol for training QCNNs to discover order parameters that are unchanged under phase-preserving perturbations. We initiate the training sequence with the fixed-point wavefunctions of the quantum phase and add translation-invariant noise that respects the symmetries of the system to mask the fixed-point structure on short length scales. 
We illustrate this approach by training the QCNN on phases protected by time-reversal symmetry in one dimension, and test it on several time-reversal symmetric models exhibiting trivial, symmetry-breaking, and symmetry-protected topological order. The QCNN discovers a set of order parameters that identifies all three phases and accurately predicts the location of the phase boundary. The proposed protocol paves the way towards hardware-efficient training of quantum phase classifiers on a programmable quantum processor.
\end{abstract}

\maketitle

Phases of matter are traditionally identified by measuring order parameters, including local order parameters for symmetry-breaking phases and string order parameters for one-dimensional symmetry-protected topological (SPT) phases~\cite{Nijs:1989,Kennedy:1992,perez:2008,Pollmann:2010,schuch:2011,chen:2011}.
Finding a suitable string order parameter can be difficult in general, in particular, without the presence of additional symmetries~\cite{Ppollmann:2012}.
Nonetheless, non-local order parameters, utilizing multiple copies of the system, can directly extract the topological invariant of one-dimensional (1D) SPT phases for global symmetries~\cite{haegeman:2012, Ppollmann:2012}. They can also be probed by randomized measurements~\cite{elben:2020} at a cost that scales exponentially with the subsystem size of interest.

\begin{figure}[t!]
    \centering
    \includegraphics{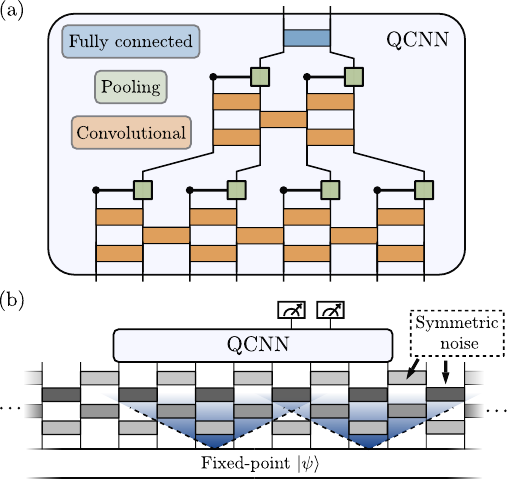}
    \caption{Architecture of an 8-qubit QCNN and training data. (a) The QCNN consists of three building blocks~\cite{cong:2019}. Convolutional layers consist of two-qubit gates (orange). The pooling is achieved by a set of controlled rotations (green). The gates in the pooling and the convolutional layers can be either translation-invariant or independently parametrized. The fully connected layer in this case is a two-qubit gate (blue). (b) Random local symmetric two-qubit gates create finite correlations that mask the microscopic details of the state. In the shown example, four layers of symmetric noise ($L_{\text{noise}}=4$) are applied to a fixed-point in the thermodynamic limit. Each layer is translationally invariant and independently sampled.}
    \label{fig:qcnn}
\end{figure}

Recently, classical and quantum machine learning approaches have been introduced to tackle the task of quantum phase classification; see e.g. Refs.~\cite{Broecker:2017,huembeli:2018,dong:2019,Bohrdt:2019, carleo:2019, cong:2019, huang:2022,monaco:2023,sadoune:2023}. Quantum circuit classifiers, such as quantum convolutional neural networks (QCNNs)~\cite{cong:2019}, as demonstrated experimentally in Ref.~\cite{Herrmann:2022}, naturally provide a quantum machine learning architecture to learn observables for the classification of phases. One advantage of these classifiers is, that the learned observable can be efficiently measured on a quantum device. 
Thus, an interesting question is to study whether a quantum machine learning approach, such as the training of a QCNN, can automate the discovery of (non-local) order parameters that characterize the phases and are efficient to measure experimentally. 
The training of a quantum classifier faces challenges. For example, a large amount of labelled training data, in the form of ground states, is needed. These states could be obtained by adiabatic~\cite{farhi:2000,ge:2019} or variational algorithms~\cite{Peruzzo:2014,McClean:2016}. 
However, the generation of a large amount of ground states can become infeasible, in particular, for noisy intermediate scale quantum devices. Another prominent obstacle is vulnerability or over-fitting of the QCNN, caused by training on a specific class of ground states.


To address the aforementioned issues, we propose a model-independent quantum protocol for training QCNNs using minimal information about the gapped quantum phases, which includes the fixed-point wavefunctions and the symmetry group of the system~\cite{gu:2010,zeng:2015}. 
We train the QCNN with synthetic training data by first constructing the fixed-point wavefunctions, which are typically efficiently prepared~\cite{verstraete:2005,gu:2009a}, and then apply a finite number of layers of random symmetric local gates. Each layer is translationally invariant and independently sampled. 
Randomness helps prevent the QCNN from learning local (non-universal) properties of the states by masking the local structure of the fixed-point.

\textit{\textbf{Quantum phase classification.---}}The architecture of QCNNs consists of convolutional, pooling, and fully-connected layers~\cite{cong:2019}, as shown in Fig.~\ref{fig:qcnn}a. Each convolutional layer is a finite-depth circuit of local unitary gates. The pooling layer is a set of parallel controlled single-qubit rotations, where the control qubits are discarded.
During the training, the controlled rotations can be absorbed into the convolutional layer. Originally, the unitary gates on the same circuit layer were chosen to be identical everywhere~\cite{cong:2019}. Besides this uniform ansatz, we also investigate independently parametrized gates. Despite having an extensive number of free parameters, and hence an increase in the difficulty of training, both the uniform and the generalized QCNN are barren-plateau free~\cite{pesah:2021}. Before the readout, a fully connected layer ``summarizes'' the information into the measured qubits. This layer is a multi-qubit gate that acts on all the qubits at the final level of the network.

Here, we focus on 1D systems and consider the following task: Suppose we have $M$ phases protected by a symmetry group $G$ and a set $\Psi_G = \{\ket{\psi_1},\cdots,\ket{\psi_{M}}\}$, where each $\ket{\psi_m}\in \Psi_G$ is a fixed-point wavefunction for each phase under the real-space renormalization group flow~\cite{verstraete:2005, vidal:2007,levin:2007,gu:2009a,evenbly:2015}. We aim to find a QCNN that predicts the phase of any input ground state $\ket{\psi}$ of a symmetric Hamiltonian.

We tackle the classification task by training an $N$-qubit QCNN that acts on an infinite system (Fig.~\ref{fig:qcnn}b). 
We choose the number of readout qubits at the fully connected layer to be $\ceil*{\log_2M}$ and associate each phase with a bitstring label $s\in\{1,2,\cdots,M\}$. The probability for each bitstring $\ket{s}$ is interpreted as the QCNN's confidence score for that phase, and the prediction is the phase with the largest probability.
A quantum phase transition is marked by the change of the label with the highest probability.
This contrasts the original QCNN considered in Ref.~\cite{cong:2019}, which produces an order parameter that vanishes for one phase and is non-zero for the other phases.

\textit{\textbf{Model-independent training.---}}To train an $N$-qubit QCNN, we use the stability of the quantum phases under finite-depth symmetric quantum circuits. Two ground states $\ket{\psi_1}, \ket{\psi_2}$ belong to the same phase if and only if they are related by a finite-depth local quantum circuit 
$
    \ket{\psi_1}\sim\prod_k \hat U^{(k)}\ket{\psi_2},
$
where $\hat U = \prod_k \hat U^{(k)}$ is a product of layers of local unitaries that can be continuously connected to the identity~\cite{gu:2010,zeng:2015}. 
When the system has certain symmetries, $\hat U$ needs to be symmetric as well.

We generate the training data with the following steps (sketched in Fig.~\ref{fig:qcnn}b):
\begin{enumerate}
    \item Randomly pick a label $m\in \{1,\cdots,M\}$ and prepare the fixed-point wavefunction $\ket{\psi_m}\in\Psi_G$.
    \item Apply $L_{\text{noise}}$ layers of random symmetric local two-qubit gates, for $L_{\text{noise}}<N/2$. 
\end{enumerate}
The requirement of $L_{\text{noise}}<N/2$ comes from the finite size of the QCNN. Namely, if the correlation length created by the noise becomes comparable to the QCNN's size, the phases are no longer distinguishable by the QCNN. 
In practice, we first train the QCNN with a single layer of noise then increase the number of layers one-by-one as we achieve convergence. We continue until test accuracy falls below a threshold.
%
We restrict ourselves to the simplest case of two-qubit gates for the noise. However, the scheme can be easily generalized to symmetric gates that act on more qubits.
%

\textit{\textbf{Training time-reversal symmetric phases in 1D.---}}To investigate the effectiveness of the protocol, we consider classification of gapped phases of 1D translationally invariant spin-$1/2$ chains with time-reversal symmetry generated by $T = \left(\prod_i X_i\right)K$, where $X$ is the Pauli-$X$ matrix and $K$ is complex conjugation. We focus on the cases where the translation symmetry is not spontaneously broken. The system hosts three phases~\cite{schuch:2011,chen:2011}: (i) The symmetry-breaking (SB) phase, where $T$ is spontaneously broken and the system has degenerate ground states; (ii) the trivial phase; and (iii) the SPT phase. In the thermodynamic limit, the ground state is unique for (ii) and (iii), but $T$ is fractionalized trivially and non-trivially, respectively. A fixed-point wavefunction can be found for each phase
\begin{align}
     \ket{\psi_{\text{SB}}} &= \frac{1}{\sqrt{2}}\left(\ket{\cdots 000\cdots}+\ket{\cdots 111\cdots}\right),\nonumber\\
      \ket{\psi_{\text{Trivial}}} &= \ket{\cdots +++\cdots},\qquad \ket{\psi_{\text{SPT}}} = \ket{CS},\label{eq:FP}
\end{align}
where the basis $Z\ket{0} = \ket{0}, Z\ket{1} = -\ket{1}$ is used, and $\ket{+} = (\ket{0}+\ket{1})/\sqrt{2}$. The state $\ket{CS}$ is the cluster state satisfying $Z_{i-1}X_iZ_{i+1}\ket{CS} = -\ket{CS},\ \forall i$. The set of time-reversal symmetric two-qubit unitary gates that continuously connects to the identity forms a Lie group $Q$ generated by the set of Pauli strings $P = \{iZ_1, iZ_{2},iZ_1Y_{2},iY_1Z_{2}, iZ_1X_{2},iX_1Z_{2}  \}$.

In the Supplemental Material (SM)~\cite{SM}, we prove that the QCNN that we aim to find does not exist without imposing translational invariance (TI), or other additional symmetries, on the input data. However, when we impose TI, there exists a set of observables that can be used to perfectly identify the phases in the thermodynamic limit~\cite{SM}.
Therefore, a QCNN for TI input states may be found and the prediction of the phase is obtained by applying a low-depth quantum circuit followed by local bitstring measurements.

\begin{figure}
    \centering
    \includegraphics{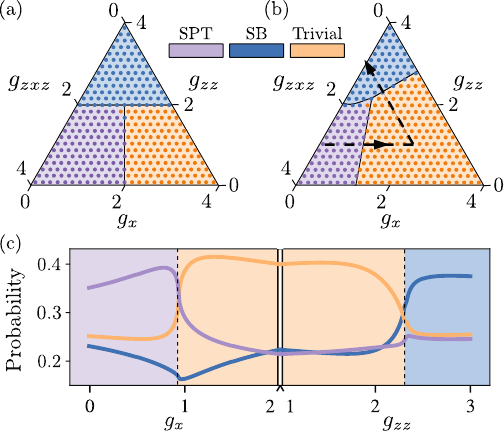}
    \caption{(a) QCNN prediction of the phase diagram of the cluster-Ising model $H_{CI}$ (filled circles); see Eq.~\eqref{eq:Ising_cluster}. The background and the boundaries are the theoretical phase diagram solved analytically. (b) QCNN prediction (filled circles) obtained for a perturbed cluster-Ising model $H_{pCI}$. The theoretical phase diagram is obtained by an iDMRG algorithm. (c) A zoom-in along the cut of the perturbed phase diagram in (b) indicated by the black dashed line. The QCNN prediction corresponds to the solid lines. The dashed vertical lines mark the location of the critical points predicted by iDMRG.}
    \label{fig:prediction}
\end{figure}

In principle we allow TI for arbitrary size unit cells. Here, we impose TI with a two-site unit cell by using a single two-qubit gate per layer, which is repeated across the system, see Fig.~\ref{fig:qcnn}b. 
The two-quibt gate is parametrized as $\exp(\sum_k \theta_k P_k)$, where $P_k \in P$, the set of generators for the symmetric noise, and each $\theta_k$ is randomly sampled from a uniform distribution.
In the following, we will focus on the QCNN in which every gate is independently parametrized.
To classify the three phases, the prediction is made by measuring two qubits at the end of the QCNN circuit. We assign labels to each bitstring output: $00\to\text{Trivial}$, $01\to\text{SB}$ and $10\to\text{SPT}$. The bitstring $11$ corresponds to an unsuccessful classification.

We provide details of the training in the SM~\cite{SM} and summarize the results here. The trained QCNN is tested on $10^5$ synthetic data constructed with different $L_{\text{noise}}$ to obtain a final test accuracy. We first train a $4$-qubit QCNN. The test accuracy reached by the 4-qubit QCNN is $87.21\%$ on the test data generated with $L_{\text{noise}} = 1$. This means that the $4$-qubit QCNN is not able to perfectly distinguish the phases of a 4-qubit subsystem when the correlation length of the system is roughly two sites. For an 8-qubit QCNN, the performance is drastically improved and attains a test accuracy of $100\%$ on engineered data with $L_{\text{noise}} = 1$ and of $97.37\%$ on data with $L_{\text{noise}} = 2$. 

In the SM~\cite{SM}, we also examine the ansatz with uniformly parametrized gates within the 8-qubit QCNN, which achieves similar performance to the non-uniform ansatz when increasing the depth of the convolutional layer from 3 to 5 in Fig.~\ref{fig:qcnn}a. Despite the deeper circuit, the uniform ansatz has fewer free parameters.

\textit{\textbf{Predictions on physical models.---}}
We now test the trained QCNN on different time-reversal symmetric physical models. We will use the 8-qubit QCNN and present 4-qubit QCNN data in the SM~\cite{SM}.

We first consider a cluster-Ising model~\cite{smacchia:2011,verresen:2017,smith:2022} where the phases are protected by time-reversal symmetry $T=\left(\prod_iX_i\right)K$. The Hamiltonian of the system is
\begin{equation}
    H_{CI} = g_{zxz}\sum_i Z_{i-1}X_iZ_{i+1} - g_{zz}\sum_i Z_iZ_{i+1} - g_x\sum_i X_i, \label{eq:Ising_cluster}
\end{equation}
where $g_{zxz}, g_{zz},g_{x}\geq 0$. Depending on the couplings, the symmetry protects three distinct phases---the trivial, the SB and the SPT phase.

We test the QCNN over the phase diagram of $H_{CI}$ in Eq.~\eqref{eq:Ising_cluster}. We do this by first finding the ground states of $H_{CI}$ using an infinite density matrix renormalization group algorithm (iDMRG)~\cite{dmrg:1993,vidal:2007a}, which are then input to the QCNN for classification. The results are shown in Fig.~\ref{fig:prediction}a. The background color marks the theoretical prediction, while the colored circles show the QCNN prediction. The theoretical phase diagram is obtained by mapping $H_{CI}$ to a free-fermion chain~\cite{verresen:2017}. We see that the QCNN accurately predicts the phase diagram. To test robustness, we add a perturbation $H_{pCI}= H_{CI}-g_x\sum_i X_{i}X_{i+1}$ that breaks the free-fermion mapping of the chain. The prediction of the same QCNN is shown in Fig.~\ref{fig:prediction}b. In this case, the theoretical phase diagram is obtained by a transfer-matrix approach ~\cite{Ppollmann:2012} based on iDMRG. The trivial phase is expanded in parameter space due to the additional coupling. The trained QCNN again accurately predicts the shifted phase boundary.
To take a closer look, in Fig.~\ref{fig:prediction}c we show the probability for the three phases given by the QCNN along a particular cut in the phase diagram (the black, dashed path with arrows in Fig.~\ref{fig:prediction}b). 

The cluster-Ising model has the special property that the phase diagram contains the fixed-point wavefunctions Eq.~\eqref{eq:FP}, when only one of $g_{zxz}, g_{zz},g_{x}$ is non-zero.  We remove this property by applying the trained QCNN to four additional time-reversal symmetric physical models that are previously unseen by the network.
To start, we consider a cluster model with a $Y$ field, namely $ H_1 = (1-\lambda)\sum_i Z_{i-1}X_{i}Z_{i+1} - \lambda\sum_iY_i$. The model has a transition from the SPT phase to the trivial phase at $\lambda = 1/2$, which is accurately captured by the QCNN as shown in Fig.~\ref{fig:compare_cases}a. Similarly, we consider $ H_2 = (1-\lambda)\sum_i Z_{i-1}Y_{i}Z_{i+1} - \lambda\sum_iY_i$ with a modified cluster coupling term. The transition at $\lambda = 1/2$ is also identified by the QCNN as shown in Fig.~\ref{fig:compare_cases}b. 

Next, we consider $ H_3= (1-\lambda)\sum_i X_{i-1}Y_{i}X_{i+1} + \lambda\sum_iY_i$. This Hamiltonian illustrates an intricate example where the correlation length diverges at $\lambda = 1/2$, but the system never leaves the trivial phase with respect to the $T$ symmetry. 
As shown in Fig.~\ref{fig:compare_cases}c, the QCNN trained based on the representation $T$ correctly predicts the phase diagram. We emphasize that the system also has another time-reversal symmetry represented by $T'=(\prod_i Z_i) K$ which is responsible for a phase transition at $\lambda=1/2$: Under $T'$, the system belongs to distinct phases for $\lambda >1/2$ and $\lambda<1/2$.
Such a transition can be captured if the QCNN is trained based on the representation $T'$. This can be easily verified by noting $H_2$, where the distinct phases are identified by the QCNN, is related to $H_3$ by a basis transformation.

The last example we consider is $H_4 = (1-\lambda)\sum_{i}H^{\text{bond}}_{2i} + \lambda\sum_{i}H^{\text{bond}}_{2i+1}$, describing an antiferromagnetic alternating-bond Heisenberg model. We denote the XXZ-coupling on each bond as $H^{\text{bond}}_{2i} = X_{2i}X_{2i+1}+Y_{2i}Y_{2i+1}+\Delta Z_{2i}Z_{2i+1}$. At the limit $\lambda = 0$ and $\lambda = 1$, the system is in two different dimerized states. Interestingly, the system has a time-reversal symmetry represented by an effective spin-1 $\pi$-rotation around the $y$-axis in the bulk followed by a complex conjugation~\cite{verresen:2017}, which protects the two dimerized limits as distinct phases. However, under the symmetry representation $T$, the two dimerized limits can be continuously connected without a phase transition and they both belong to the trivial phase~\cite{SM}. For sufficiently strong $\Delta$, the model exhibits an anitferromagnetic ordering at intermediate $\lambda$. In Fig.~\ref{fig:compare_cases}d we show the case of $\Delta=4$, exhibiting an intermediate SB phase in the vicinity of $\lambda = 1/2$. 
We note that the transition points predicted by the QCNN are slightly shifted away from the iDMRG phase boundary. This is reasonable given the relatively large correlation length near the phase boundary, which the 8-qubit QCNN cannot fully accommodate.

\begin{figure}
    \centering
    \includegraphics{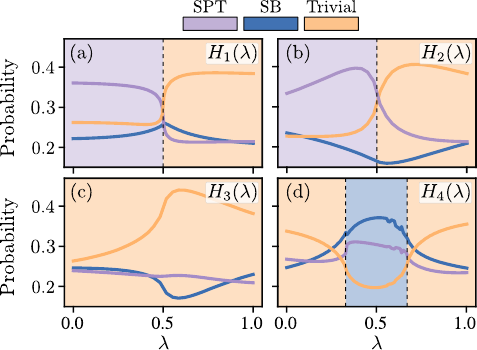}
    \caption{QCNN prediction on various microscopic models. In (a) and (b), a transition between an SPT and the trivial phase is detected at $\lambda = 1/2$. (c) The system is predicted to lie entirely in the trivial phase. A phase transition protected by a different symmetry occurs at $\lambda = 1/2$. (d) The bond-alternating Heisenberg model for $\Delta = 4$. The anitferromagnetic order is correctly detected by the QCNN. Details can be found in the text.}
    \label{fig:compare_cases}
\end{figure}
\textit{\textbf{Classification of other 1D symmetric phases.---}}The protocol we propose can be applied to generic 1D symmetric quantum phases (with additional symmetries, such as TI). The set of local symmetric unitary gates for a given symmetry representation forms a unitary Lie group and can be found by identifying all the symmetric generators of the group~\cite{SM}. 
For example, the cluster-Ising model is also protected by a $\mathbf{Z}_2\times\mathbf{Z}_2^T$ symmetry, i.e. the set of symmetries $\{I,\prod_i X_{i},K, \left(\prod_i X_{i}\right)K\}$, where $I$ is the identity. The symmetric two-qubit unitary is generated by $iZ_1Y_2$ and $iY_1Z_2$. Using the fixed-points Eq.~\eqref{eq:FP}, we can also train a QCNN that accurately characterizes the cluster-Ising model similar to Fig.~\ref{fig:prediction}a~\cite{SM}. This QCNN converges faster but is less powerful beyond the cluster-Ising model compared to the time-reversal case. Since $T\in \mathbf{Z}_2\times\mathbf{Z}_2^T$, the system has more disconnected phases due to a larger symmetry group. Therefore, more fixed-points are required to fully cover all the phases. However, as we see, imposing more symmetry can simplify the training process.

The non-existence of a QCNN without additional symmetries, such as TI, can also be proven for 1D systems with $ \mathbf{Z}_2\times\mathbf{Z}_2^T$ or $ \mathbf{Z}_2\times\mathbf{Z}_2$ symmetry~\cite{SM}. The requirement of additional symmetries is thus potentially applicable to the classification task of general symmetric phases.

\textit{\textbf{Discussion \& outlook.---}}
The proposed method is reminiscent of data augmentation in classical machine learning for reducing over-fitting~\cite{Shorten:2019}. A key difference is that our training data set is entirely generated with perturbation. This is possible due to the notion of quantum phases. 
Let $\delta $ be the finite gap between the largest probability and the other probabilities in the output distribution of a QCNN.
In practice, $\delta$ not only ensures that the QCNN's prediction is robust under weak perturbation, it also provides an estimate of the number of projective measurements required to accurately determine the prediction of the QCNN based on a majority vote. An error probability of $\epsilon<1$ can be achieved with more than $ 2\log\epsilon/\log(1-\delta^2)$ repetitions~\cite{SM}.

The protocol can be further simplified by replacing the SB fixed-point, i.e. $\frac{1}{\sqrt{2}} (\ket{\cdots 000\cdots}+\ket{\cdots 111\cdots})$ with the asymmetric product state $\ket{\cdots 000\cdots}$ or $\ket{\cdots 111\cdots}$ which are easier to prepare. In the SM~\cite{SM}, we show that such replacement does not affect the performance of the trained QCNN on the time-reversal symmetric phases.

While physical observables that characterize 1D SPT phases are relatively well understood, probing the SPT order in higher dimensions is much more challenging~\cite{chen:2011a,zeletel:2014}. One exciting question is whether the proposed protocol can discover such an observable. 
Another interesting direction is to discover phase-classifying observables for intrinsic topological order, knowing that their fixed points can be efficiently prepared on quantum hardware~\cite{Satzinger2021,tant:2021,Bluvstein:2022,Liu:2022}. 
Although it has been shown that such an observable cannot exist in general~\cite{huang:2022}, it remains an open question whether imposing TI or other symmetries could help as for the 1D symmetric case discussed here. 


Under the current setup, the proposed method is unable to detect hidden phases that are not known a pirori. However, we observed that in some examples when a trained QCNN is implemented to classify an unknown phase, it gets confused by multiple phases with matching probability~\cite{SM}. It would therefore be intriguing to see whether this behavior is generic at large system size and whether such confusion could be used to identify existence of an unknown phase~\cite{nieu:2017}.
Another important question to study is the underlying principles for the phase detection behind a trained QCNN. Besides comparing it with some known analytical examples such as in Refs.~\cite{cong:2019,lake:2022}, a possible strategy would be to run the trained QCNN backward and use it as a generative model. Some properties of the trained QCNN may be inferred by examining the generated states. 
It will also be interesting to see whether the trainability of the classifiers can be improved by incorporating symmetry in the design of the classifiers~\cite{meyer:2022,Larocca:2022}.


\textbf{\textit{Note added.---}}While finalizing our work, Ref.~\cite{lake:2022} appeared which also constructs QCNN phase classifiers on translationally invariant input states.

\textit{\textbf{Acknowledgement.---}}We thank Benoît Vermersch for helpful discussion. The DMRG simulations were performed using the TeNPy library~\cite{hauschild:2018}. We use the Jax library~\cite{jax2018github} for optimization.  Y.-J.L was supported by the Max Planck Gesellschaft (MPG) through the International Max Planck Research School for Quantum Science and Technology (IMPRS-QST). A.S. acknowledges support from a research fellowship from the The Royal Commission for the Exhibition of 1851.
Y.-J.L., M.K. and F.P. acknowledge support from the Deutsche Forschungsgemeinschaft (DFG, German Research Foundation) under Germany’s Excellence Strategy--EXC--2111--390814868 and DFG grants No. KN1254/1-2, KN1254/2-1, the European Research Council (ERC) under the European Union’s Horizon 2020 research and innovation programme (Grant Agreement No. 851161), as well as the Munich Quantum Valley, which is supported by the Bavarian state government with funds from the Hightech Agenda Bayern Plus.
{\par\textit{Data and materials availability:}} Data analysis and simulation codes are available on Zenodo upon reasonable request~\cite{zenodo}.

\bibliography{qcnn.bib}

\newpage
\widetext

\setcounter{equation}{0}
\setcounter{figure}{0}
\setcounter{table}{0}
\setcounter{page}{1}

\renewcommand{\theequation}{S\arabic{equation}}
\renewcommand{\thefigure}{S\arabic{figure}}

\begin{center}
		{\fontsize{12}{12}\selectfont
			\textbf{Supplemental Material for  ``\papertitle''\\[5mm]}}
		
\end{center}
\normalsize\

The Supplemental Material is organized as follows: In Section~\ref{sm:sec:details}, we provide further details regarding the training and testing of the QCNN. In Section~\ref{sm:sec:nonsym}, we show further simulation results for the QCNN trained using a product state for the symmetry-breaking phase, instead of the symmetric state. In Section~\ref{sm:sec:othersym}, we show the results of applying the protocol to the system with $\mathbf{Z}_2\times\mathbf{Z}_2^T$ symmetry. In Section~\ref{sm:sec:TI}, we show the training results for uniform ansatz of the QCNN. In Section~\ref{sm:sec:extended}, we discuss the performence of the QCNN for different numbers of noise layers. In Section~\ref{sm:sec:nogo}, we prove non-existence results of phase classifying observables for time-reversal, $\mathbf{Z}_2\times\mathbf{Z}_2$ and $\mathbf{Z}_2\times\mathbf{Z}_2^T$ symmetry. In Section~\ref{sm:sec:so_interpretation}, we discuss how to correctly interpret the non-existence results in the context of string order parameters. In Section~\ref{sm:sec:avoidnogo}, we show how imposing TI allows us to avoid all the non-existence results in Section~\ref{sm:sec:nogo}. In Section~\ref{sm:sec:disorder}, we show the prediction accuracy of the QCNNs under weak disorder. In Section~\ref{sm:sec:heisenberg}, we discuss the dimerized Heisenberg chains. In Section~\ref{sm:sec:symmetrization}, we discuss the procedure of symmetrization to find symmetric generators for the local unitary. In Section~\ref{sm:sec:sample_estimate} we derive the upper bound for the error probability of a majority vote process.

\section{QCNN training and testing}\label{sm:sec:details}
In this section, we provide further details on the QCNN training and testing. In the main text, we focus on a QCNN that acts on $N$ qubits with $N = 4$ or 8 (see Fig.~1 in the main text). During the training, the pooling layers are absorbed into the two-qubit gates at the end of the the convolutional layers. A two-qubit gate is a $4\times 4$ unitary parametrized by 15 parameters as $\exp\left(-\frac{i}{2}\sum_{\rho,\gamma\in\{0,1,2,3\}}\theta_{\rho,\gamma}\hat O^{\rho}\otimes\hat O^{\gamma}\right)$, with the matrices $\hat O^0 = \mathbf{I},\ \hat O^1 = X,\ \hat O^2= Y$ and $\hat O^3= Z$. We set $\theta_{0,0} = 0$ to fix the phase degree of freedom of the gate.
\begin{figure}[h]
    \centering
    \includegraphics{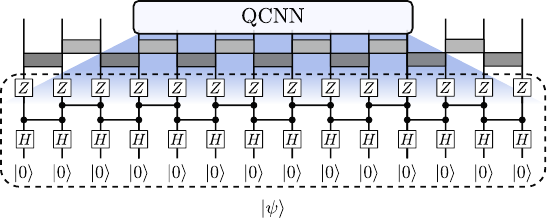}
    \caption{To simulate an $N$-site QCNN with $N = 8$ and $L_{\text{noise}} = 2$, we use a 14-qubit system with the QCNN acting on the middle 8 qubits. This avoids the finite-size effects. The absence of the finite-size effects can be checked for the input fixed-point function, e.g. $\ket{CS}$ with $Z_{i-1}X_iZ_{i+1} = -\ket{CS}$ in the bulk as shown. The $H$-gate is the Hadamard gate and the line connected by two dots is the controlled-$Z$ gate. The other fixed-point wavefunctions used in the work can also be verified to have no finite-size effects for the chosen QCNN.}
    \label{fig:sm:finite_size}
\end{figure}

We train the QCNN based on a log-softmax cross entropy loss function. Given a batch of $|B|$ input states and their labels $B = \{(\mathbf{p},l)\}$, the loss function for the batch is defined by
\begin{equation}
    \label{sm:eq:loss}
    L(B) = -\frac{1}{|B|}\sum_{(\mathbf{p},l)\in B}\log\bigg(\frac{e^{Cp_l}}{\sum\limits_j e^{Cp_j}}\bigg).
\end{equation}
In the above equation, $p_j$ is the $j^\mathrm{th}$ probability from the output bitstring distribution $\mathbf{p}$, $C$ is a constant used to set the desired scale of probability difference between different labels. In our experiments, we set $C=50$. Suppose the symmetric local unitary is generated by some Pauli strings $P_k$ such that the unitary is parametrized as $\exp(\sum_k \theta_kP_k)$ for some $\theta_k\in\mathbf{R}$. We sample the symmetric unitary by uniformly sampling $\theta_k\in(-\pi, \pi]$.

To implement the proposed protocol, we generate the training data for each phase by applying $L_{\text{noise}}=1$ or $L_{\text{noise}}=2$ layers of random symmetric two-qubit unitary to a fixed-point wavefunction of the phase. The training is done as follows: we start with $L_{\text{noise}}=1$ until we have reached 100\% test accuracy and we use the output QCNN to continue the training on data with $L_{\text{noise}}=2$. The optimization is performed using Adam optimizer~\cite{kingma2014adam}, with a learning rate of $5\times 10^{-4}$ for $L_{\text{noise}}=1$ and $1\times 10^{-4}$ for $L_{\text{noise}}=2$. To ensure convergence, for each training session of the 4-qubit (8-qubit) QCNN we generate 30000 (60000) samples for training and 1000 samples for testing. The batch size is 30 (50) and the number of epochs is chosen to be at most 12000. At the very end, we obtain a final test accuracy of the already-trained QCNN on 10000 engineered data with different $L_{\text{noise}}$. Note that we have not optimized the choice of the training sample size here. We expect much fewer training samples can be used to produce less optimal, yet reasonable results.

In the simulation, we simulate the application of an $N$-qubit QCNN to an infinite system by including $L_{\text{noise}}+1$ more qubits on the left and right of an $N$-qubit system, respectively. In total, the system contains $N+2(L_{\text{noise}}+1)$ qubits. An example of $N = 8$ and $L_{\text{noise}} = 2$ is depicted in Fig.~\ref{fig:sm:finite_size}, the system consists of 14 qubits in total. The QCNN only acts on the middle 8 qubits, such that any expectation values evaluated within these 8 qubits are the same as the expectation values evaluated in an infinite system. We can verify this with the circuit generating the cluster state as shown in Fig.~\ref{fig:sm:finite_size}.

We first train a 4-qubit QCNN using the protocol. We train with $L_{\text{noise}} = 1$ data and test on $L_{\text{noise}} = 1$ data as well. The trained QCNN yields a test accuracy of 87.21\%. It shows that the 4-qubit QCNN is starting to be able to recognize the phases (much better than a random guessing accuracy of $33\%$) but it is unable to do it accurately due to the small system size. 
We reproduce Fig.~2a and 2b in the main text using the 4-qubit QCNN in Fig.~\ref{fig:sm:qcnn_4q}. As we can see, near the fixed points where the correlation length of the system is small, the QCNN does a good job. The prediction becomes incorrect quickly when  approaching the phase boundaries. For an 8-qubit QCNN, we achieve a test accuracy 100\% for $L_{\text{noise}} = 1$ and 97.37\% for $L_{\text{noise}} = 2$. This shows the increased size of the QCNN allows it to distinguish states with a larger correlation length, as we expect.

Note that in the plots we show in the main text, we neglect the unsuccessful probability for the qubit label $11$. We can recover it by requiring that all the probabilities in the figures sum to 1. For example, we reproduce Fig. 2c in the main text with the unsuccessful probability included in Fig.~\ref{sm:fig:unsuccess}. 

\begin{figure}
    \centering
    \includegraphics{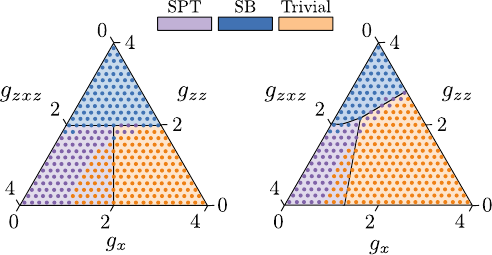}
    \caption{Phase diagrams predicted by a  4-qubit QCNN (c.f. Fig.2a and 2b in the main text for 8-qubit QCNN).}
    \label{fig:sm:qcnn_4q}
\end{figure}

\begin{figure}
    \centering
    \includegraphics{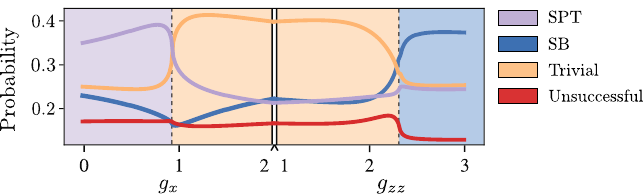}
    \caption{We reproduce the 8-qubt result shown in Fig.~2c in the main text here with the unsuccessful probability included. Now that all probabilities sum to 1 and we see that the unsuccessful probability remains small along this path.}
    \label{sm:fig:unsuccess}
\end{figure}

\section{QCNN prediction with modified training setup}\label{sm:sec:nonsym}
As we remark in the main text, the QCNN trained by the perturbed fixed-point wavefunction can pick up the long-range order of the system. We therefore expect that the fixed-point wavefunctions can be replaced by any wavefunctions that share the same long-range order of the phase. For example, we can simplify the training procedure by replacing the symmetric SB state $(\ket{\cdots 000\cdots}+\ket{\cdots 111\cdots})/\sqrt{2}$ by the asymmetric product state $\ket{\cdots 000\cdots}$ or $\ket{\cdots 111\cdots}$ which are much easier to prepare on a quantum hardware. 

We test this thought by performing the training of an 8-qubit QCNN for the time-reversal symmetry using the SB state $\ket{\cdots 000\cdots}$. Again, for each training session we use 60000 training samples and 1000 testing samples. We run the training with a batch size of 50 and 1000 epochs in total. At the end, we test the performance on 10000 data generated by the set of fixed-points involving the symmetric state, namely, the same test set for the QCNN in the previous section. For $L_{\text{noise}} = 1$, we achieve a test accuray of 100\% and for $L_{\text{noise}} = 2$ we achieve a test accuracy of 96.4\%. We see the performance is comparable to the QCNN trained with the symmetric fixed-point.

\section{Training with $\mathbf{Z}_2\times\mathbf{Z}_2^T$ symmetry}\label{sm:sec:othersym}
In this section, we apply the protocol to training the QCNN based on $\mathbf{Z}_2\times\mathbf{Z}_2^T$ generated by global spin flip and complex conjugation. We use the three fixed points provided in the main text. Note that, by using only three fixed points, we restrict to the phases that contain the fixed points, which only cover a subset of all the phases protected by the symmetry (with TI). Unlike the time-reversal case, we found the training in this case converges much quicker. We do not need to split the entire training into sessions with different $L_{\text{noise}}$. Instead, we directly train on the data with the prescribed $L_{\text{noise}}$. We used a training sample size of 30000 and a batch size of 30.

In this case, the symmetric local unitary is generated by the Pauli strings $iZ_1Y_2$ and $iY_1Z_2$. We first train a 4-qubit QCNN. The 4-qubit QCNN reaches a test accuracy of $99.98\%$ for $L_{\text{noise}}=1$. In Fig.~\ref{fig:sm:qcnn_4qPxT}, we show the phase diagram prediction similar to Fig.~2a and 2b using the 4-qubit QCNN. We see the QCNN does a nice job away from the phase boundary. Near the phase boundary, the QCNN again suffers from the large correlation length of the system and is less accurate. 

Next, we train an 8-qubit QCNN and the QCNN reaches a test accuracy of 100\% for both $L_{\text{noise}} = 1$ and 2. For $L_{\text{noise}} = 3$, the QCNN is not perfect and achieves 99.93\% test accuracy. The phase diagram prediction is shown in Fig.~\ref{fig:sm:qcnn_8qPxT}. Compared to the 4-qubit case, we see that the 8-qubit QCNN indeed improves significantly. Near the phase boundary, the 8-qubit QCNN is able to distinguish the phases accurately.
\begin{figure}
    \centering
    \includegraphics{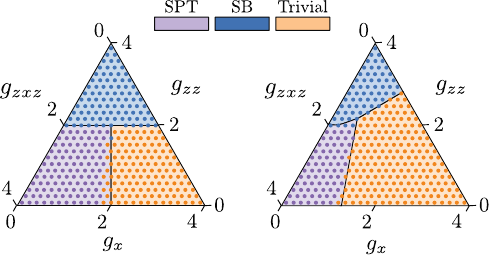}
    \caption{Comparing the theoretical phase diagram and the phase diagram predicted by the 4-qubit QCNN trained based on $\mathbf{Z}_2\times\mathbf{Z}_2^T$ symmetry.}
    \label{fig:sm:qcnn_4qPxT}
\end{figure}
\begin{figure}
    \centering
    \includegraphics{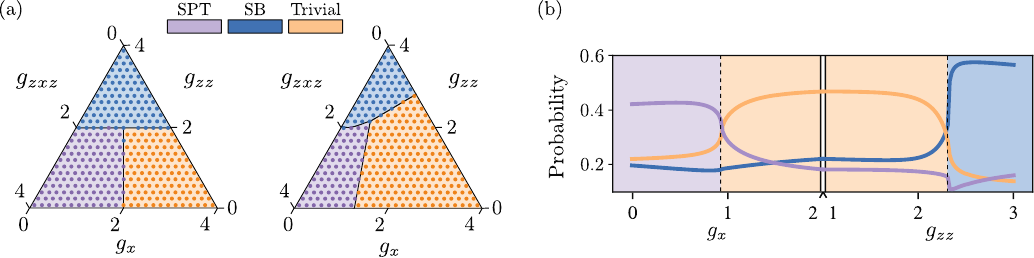}
    \caption{The 8-qubit QCNN trained based on $\mathbf{Z}_2\times\mathbf{Z}_2^T$ symmetry. Panel (a) and (b) are similar to Fig.~2 in the main text.}
    \label{fig:sm:qcnn_8qPxT}
\end{figure}

We summarize the test accuracy of the trained QCNN for different symmetries and number of qubits in Table.~\ref{tab:extended1} and~\ref{tab:extended2} with the values of $L_{\text{noise}}$, which we selected for the training of the QCNN in this section and Section~\ref{sm:sec:details}  marked in blue.

As we mentioned in the beginning of the section, the training is performed only with three fixed points under $\mathbf{Z}_2\times\mathbf{Z}_2^T$ symmetry. What happens if we apply the trained QCNN to predict an unknown phase? To experiment with this, we consider the following Hamiltonian with $\mathbf{Z}_2\times\mathbf{Z}_2^T$ symmetry
\begin{equation}
    H = (1-\lambda)\sum_i Z_{i-1}X_iZ_{i+1}-\lambda\sum_i Y_{i-1}X_iY_{i+1},
\end{equation}
where $\lambda\in[0,1]$. At $\lambda=0$, we recover the fixed point we used in the main text and the training. At $\lambda = 1$, the system in fact has a non-trivial SPT order where the symmetry of complex conjugation $K$ acquires a non-trivial fractionalization, but $K\prod_iX_i$ fractionalizes trivially. This phase is an unknown phase for the trained QCNN. Applying the trained QCNN to this model with different $\lambda$ yields the prediction as shown in Fig.~\ref{sm:fig:confusion}. We see that after a phase transition at $\lambda = 1/2$, the QCNN starts to get confused by multiple phases with a similar probability for $\lambda>1/2$. Whether this behavior is generic for other unknown phases using a larger QCNN is an interesting question to be investigated in the future.

\begin{figure}
    \centering
    \includegraphics[scale = 0.7]{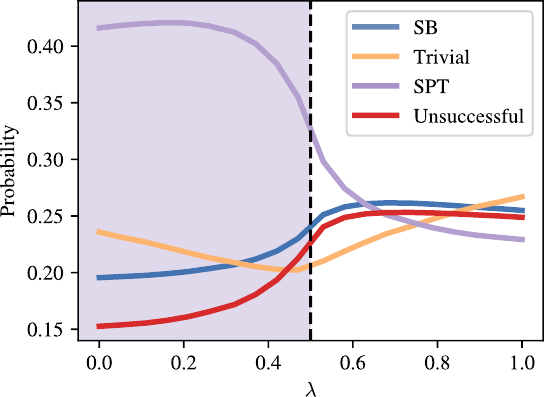}
    \caption{The 8-qubit QCNN trained with $\mathbf{Z}_2\times\mathbf{Z}_2^T$symmetric samples gets confused by multiple phases with matching probability when applied to an unknown phase.}
    \label{sm:fig:confusion}
\end{figure}

\begin{table}[]
\begin{tabular}{|ll|l|l|l|}
\hline
\multicolumn{2}{|l|}{$\textbf{Time-reversal}$}                                                  & $L_{\text{noise}}=1$ & $L_{\text{noise}}=2$ & $L_{\text{noise}}=3$ \\ \hline
\multicolumn{1}{|l|}{\multirow{3}{*}{4-qubit (90 parameters)}}         & \textcolor{blue}{$L_{\text{noise}} = 1$}   &   87.21\%                   & 68.69\% & 60.18\% \\ \cline{2-5} 
\multicolumn{1}{|l|}{}                                                 &  $L_{\text{noise}}=2$  &     84.66\%                 & 71.33\% & 64.44\% \\ \cline{2-5} 
\multicolumn{1}{|l|}{}                                                 &  $L_{\text{noise}}=3$ &     79.02\%                 & 69.44\% & 63.99\% \\ \hline
\multicolumn{1}{|l|}{\multirow{3}{*}{8-qubit (255 parameters)}}        & $L_{\text{noise}}=1$   &  100\%                    & 94.71\%  & 89.88\%  \\ \cline{2-5} 
\multicolumn{1}{|l|}{}                                                 & \textcolor{blue}{$L_{\text{noise}} = 2$} &   100\%                   & 97.37\% & 93.46\% \\ \cline{2-5} 
\multicolumn{1}{|l|}{}                                                 &  $L_{\text{noise}}=3$  &   100\%         & 97.09\%  & 93.62\% \\ \hline
\multicolumn{1}{|l|}{\multirow{3}{*}{Uniform 8-qubit (165 parameters)}} &  $L_{\text{noise}}=1$ &  100\%                    & 93.64\% & 87.61\%  \\ \cline{2-5} 
\multicolumn{1}{|l|}{}                                                 & \textcolor{blue}{$L_{\text{noise}} = 2$}   &   100\%                   & 96.76\%  & 92.46\%  \\ \cline{2-5} 
\multicolumn{1}{|l|}{}                                                 &  $L_{\text{noise}}=3$   &     100\%                 & 97.19\% & 93.37\% \\ \hline
\end{tabular}
\caption{The extended table for time-reversal symmetry. Different rows in the table correspond to different $L_{\text{noise}}$ used for the training. Different columns show the test accuracy on data with different $L_{\text{noise}}$ using the trained QCNN. The blue $L_{\text{noise}}$ are the ones we selected for the training in the main text and in Section~\ref{sm:sec:details} and~\ref{sm:sec:TI}. They are chosen based on the stopping criterion we propose.}
\label{tab:extended1}
\end{table}

\begin{table}[]
\begin{tabular}{|ll|l|l|l|}
\hline
\multicolumn{2}{|l|}{$\mathbf{Z}_2\times \mathbf{Z}_2^T$}                                                  & $L_{\text{noise}}=1$ & $L_{\text{noise}}=2$ & $L_{\text{noise}}=3$ \\ \hline
\multicolumn{1}{|l|}{\multirow{3}{*}{4-qubit (90 parameters)}}         & \textcolor{blue}{$L_{\text{noise}} = 1$}   &   99.98\%                   & 86.88\% & 83.17\% \\ \cline{2-5} 
\multicolumn{1}{|l|}{}                                                 &  $L_{\text{noise}}=2$  &     99.72\%                 & 98.08\% & 95.41\% \\ \cline{2-5} 
\multicolumn{1}{|l|}{}                                                 &  $L_{\text{noise}}=3$ &     99.62\%                 & 98.08\% & 95.94\% \\ \hline
\multicolumn{1}{|l|}{\multirow{3}{*}{8-qubit (255 parameters)}}        & $L_{\text{noise}}=1$   &   100\%                   & 97.67\%  &  94.41\% \\ \cline{2-5} 
\multicolumn{1}{|l|}{}                                                 & $L_{\text{noise}}=2$ &     100\%              & 100\% & 99.53\% \\ \cline{2-5} 
\multicolumn{1}{|l|}{}                                                 &  \textcolor{blue}{$L_{\text{noise}} = 3$}  &    100\%       & 100\%  & 99.93\%  \\ \hline
\multicolumn{1}{|l|}{\multirow{3}{*}{Uniform 8-qubit (165 parameters)}} &  $L_{\text{noise}}=1$ &     100\%                 & 91.97\% & 89.07\% \\ \cline{2-5} 
\multicolumn{1}{|l|}{}                                                 & $L_{\text{noise}}=2$   &     100\%                 & 100\% & 99.21\% \\ \cline{2-5} 
\multicolumn{1}{|l|}{}                                                 &  \textcolor{blue}{$L_{\text{noise}} = 3$}  &         100\%             & 100\% & 99.76\% \\ \hline
\end{tabular}
\caption{The extended table for $\mathbf{Z}_2\times \mathbf{Z}_2^T$ symmetry. The blue $L_{\text{noise}}$ are the ones we selected for the training in Section~\ref{sm:sec:othersym} and~\ref{sm:sec:TI}. They are chosen based on the stopping criterion we propose.}
\label{tab:extended2}
\end{table}
\section{Results for uniform QCNN}\label{sm:sec:TI}
In this section, we show results for training the uniform ansatz of the QCNN based on both the time-reversal symmetry and $\mathbf{Z}_2\times\mathbf{Z}_2^T$. The uniform ansatz has the same structure as the non-uniform case depicted in Fig.~1a in the main text, but with all unitaries at the same circuit layer being identical now. We compare the uniform ansatz to the non-uniform ansatz and observe that the performance is worse for the uniform ansatz when the depth of the convolutional layer is 3 as in Fig.~1a. To achieve a similar performance, we increase the depth of the convolutional layer from 3 to 5. For illustration, we will focus on the QCNN that acts on 8 qubits. The optimization is similar to the training of the time-reversal case. For each training session, we generate 30000 training samples and 1000 test samples. At the very end, we obtain a final test accuracy of the already-trained QCNN on 10000 engineered data with different $L_{\text{noise}}$.

For time-reversal symmetric systems, we obtain a QCNN that achieves 100\% on $L_{\text{noise}} = 1$ data and 96.76\% on  $L_{\text{noise}} = 2$ data when training with $L_\text{noise} = 2$. For the symmetry $\mathbf{Z}_2\times\mathbf{Z}_2^T$, we obtain a QCNN that achieves 100\% for both $L_{\text{noise}} = 1$ and 2 data. For $L_{\text{noise}} = 3$ it achieves an accuracy of 99.76\% when training with $L_\text{noise} = 3$. Both ansatzs are therefore comparable, see Table.~\ref{tab:extended1} and ~\ref{tab:extended2},

\section{Extended numerical results}\label{sm:sec:extended}

In the previous sections, we show the testing results of the trained QCNN for the time-reversal case. The QCNN is trained on a prescribed $L_{\text{noise}}$ picked by the stopping criterion we adopt. Namely, we increase $L_{\text{noise}}$ used in the training, until the test accuracy for the current $L_{\text{noise}}$ drops below certain threshold, which we choose to be 100\%. It is important that we do not over train the classifier with $L_{\text{noise}}$ that is too large, since this can potentially lead to over-fitting. The criterion aims to provide a stopping point where the classifier is reasonably converged and not over-fitted. We show the extended Table~\ref{tab:extended1} for time-reversal symmetry. In the extended table, we train and test the QCNN with $L_{\text{noise}} = 1,2$ and 3. The training is performed sequentially: we start the training with data generated by $L_{\text{noise}} = 1$ layer of noise. Once a convergence is reached we continue the training with $L_{\text{noise}} = 2$ and so on until we reach the prescribed $L_{\text{noise}}$. The color blue marks the $L_{\text{noise}}$ we use for the training based on the stopping criterion. We see that at the stopping points, the QCNNs have reasonably converged.

In contrast to the time-reversal case, we do not use the sequential training for obtaining the $\mathbf{Z}_2\times\mathbf{Z}_2^T$ results. Instead, we directly train on the data with a prescribed $L_{\text{noise}}$. We also picked a suitable $L_{\text{noise}}$ used in the training based on the same stopping criterion, which correspond to the $L_{\text{noise}}$ marked blue in the extended Table~\ref{tab:extended2}. For the 4-qubit case, we see that the performance of the QCNN can be further improved. This can be taken into account if we were to modify the threshold for the test accuracy from 100\% to 99\%.

\section{Proof of the non-existence results}\label{sm:sec:nogo}
In this section, we prove three non-existence results for classifying phases with physical observables when no additional symmetries are present. For systems with time-reversal symmetry only, we prove a non-existence result for a phase-classifying observable. We also prove that in general no QCNNs can classify systems with $\mathbf{Z}_2\times\mathbf{Z}_2^T$ or $\mathbf{Z}_2\times\mathbf{Z}_2$ symmetry, where $Z_2^T$ is an anti-unitary order-2 group.

\begin{prop}\label{prop1}
Let $U\subseteq \mathbf{C}_{N\times N}$ be a compact unitary group and $C_U$ be the centralizer of $U$ in $ \mathbf{C}_{N\times N}$. If $C_U \subseteq \{\lambda I|\lambda\in\mathbf{C}\}$, then for any $M\in\mathbf{C}_{N\times N}$
\begin{equation}
  \underset{\mu(u)}{\mathbf{E}} \ uMu^{\dag} = \frac{\Tr(M)}{N}I.
\end{equation}
where the average is taken over the Haar measure $\mu$ of $U$.
\end{prop}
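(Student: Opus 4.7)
The plan is to recognize this as the standard ``twirling'' argument combined with Schur's lemma applied to the defining representation of $U$ on $\mathbf{C}^N$. Define the linear map $\Phi(M) = \underset{\mu(u)}{\mathbf{E}}\, uMu^{\dag}$, which exists because $U$ is compact and therefore admits a normalized Haar measure. The strategy is to show (i) $\Phi(M)\in C_U$ for every $M$, (ii) invoke the hypothesis to force $\Phi(M)=\lambda(M)\,I$, and (iii) determine the scalar $\lambda(M)$ by taking a trace.

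For step (i), I would pick an arbitrary $v\in U$ and compute
\begin{equation}
v\,\Phi(M)\,v^{\dag} = \underset{\mu(u)}{\mathbf{E}}\, (vu)M(vu)^{\dag} = \underset{\mu(u')}{\mathbf{E}}\, u'M u'^{\dag} = \Phi(M),
\end{equation}
where the second equality uses the left-invariance of the Haar measure under the substitution $u'=vu$. This shows $v\,\Phi(M)=\Phi(M)\,v$ for all $v\in U$, so $\Phi(M)\in C_U$. Step (ii) is then immediate from the assumption $C_U\subseteq\{\lambda I\mid\lambda\in\mathbf{C}\}$, which is precisely the conclusion of Schur's lemma applied to the defining representation of $U$ (the hypothesis is really equivalent to saying that this representation is irreducible). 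Hence there is some $\lambda(M)\in\mathbf{C}$ with $\Phi(M)=\lambda(M)\,I$.

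For step (iii), I would take the trace of both sides. Since $\mathrm{Tr}(uMu^{\dag})=\mathrm{Tr}(M)$ for every $u$ by the cyclic property of the trace, linearity gives $\mathrm{Tr}\,\Phi(M)=\mathrm{Tr}(M)$. On the other hand $\mathrm{Tr}(\lambda(M)\,I)=N\lambda(M)$, so $\lambda(M)=\mathrm{Tr}(M)/N$, which is the claimed formula.

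There is no serious obstacle here; the only point that requires a little care is justifying the interchange of trace and expectation, but this is a finite-dimensional linear operation so it is automatic. The argument relies only on compactness of $U$ (to have a Haar measure), its left-invariance, and the scalar-centralizer hypothesis, all of which are in the statement.
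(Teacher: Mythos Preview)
Your proof is correct and follows essentially the same approach as the paper: define the Haar average, use left-invariance to show it lies in $C_U$, invoke the scalar-centralizer hypothesis, and fix the scalar by taking the trace. Your write-up is slightly more detailed (e.g.\ the explicit mention of Schur's lemma and the cyclic property of the trace), but the argument is the same.
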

\begin{proof}
We make use of the Haar integration. We have
\begin{equation}\label{eq:haar}
    A = \underset{\mu(u)}{\mathbf{E}} \ uMu^{\dag}  =\int uMu^{\dag} d\mu(u).
\end{equation}
Consider any $v\in U$, we have $vAv^{\dag} = A$ based on the invariance of Haar integration. So $A\in C_U$ is proportional to the identity. Taking the trace on Eq.~\eqref{eq:haar}, we get $\Tr(A) = \Tr(M)$. Knowing that the matrix is $N\times N$, we have $A = \frac{\Tr(M)}{N}I$.
\end{proof}
\begin{prop}\label{prop2}
Given two $n$-qubit states $\ket{\psi_a},\ket{\psi_b}$, there does not exist an operator $\hat O$ such that $\bra{\phi} \hat O\ket{\phi} > 0,\ \forall \ket{\phi}\in S_a$ and $\bra{\phi} \hat O\ket{\phi}\leq 0,\ \forall \ket{\phi}\in S_b$. The sets are defined as $S_a = \{u_1\otimes u_2\otimes\cdots \otimes u_m\ket{\psi_a}|u_i\in U\}$ and $S_b = \{u_1\otimes u_2\otimes\cdots \otimes u_m\ket{\psi_b}|u_i\in U\}$, where $U$ is a compact group of unitary operators that act on $k$ qubits and has the centralizer $C_U \subseteq \{\lambda I|\lambda\in\mathbf{C}\}$ and $mk = n$.
\end{prop}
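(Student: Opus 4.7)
The plan is to proceed by contradiction via a Haar-twirling argument, essentially reducing Proposition~\ref{prop2} to a block-wise application of Proposition~\ref{prop1}. Suppose such an $\hat O$ exists. The key observation is that for any state $\ket{\psi}$ and the product group $U^{\otimes m}$ acting block-wise, the Haar average of the rank-one projector over the orbit is a multiple of the identity that does not depend on $\ket{\psi}$. If we can establish this, then the averages of $\bra{\phi}\hat O\ket{\phi}$ over $S_a$ and over $S_b$ coincide, which contradicts the strict sign separation.

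First, I would spell out the twirl block by block. Write $\rho_a=\ket{\psi_a}\bra{\psi_a}$ and consider applying the channel $\mathcal{E}_i(\rho)=\mathbf{E}_{u_i\sim\mu}(u_i\otimes I_{\overline{i}})\,\rho\,(u_i^{\dag}\otimes I_{\overline{i}})$ to block $i$, where $I_{\overline{i}}$ acts on the complementary $(m-1)k$ qubits. Expanding $\rho$ in a product basis over block $i$ and its complement and pulling the $u_i$ averages inside, Proposition~\ref{prop1} applied to each matrix block indexed by the complement immediately gives $\mathcal{E}_i(\rho)=I_i/2^k\otimes\Tr_i\rho$. Composing $\mathcal{E}_1,\dots,\mathcal{E}_m$ (they commute since they act on disjoint tensor factors) therefore yields
\begin{equation}
\underset{u_1,\dots,u_m}{\mathbf{E}}\ (u_1\otimes\cdots\otimes u_m)\,\rho_a\,(u_1^{\dag}\otimes\cdots\otimes u_m^{\dag})\ =\ \frac{\Tr\rho_a}{2^{n}}\,I\ =\ \frac{I}{2^{n}},
\end{equation}
and the analogous identity for $\rho_b$. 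In particular the two averages agree.

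Pairing with $\hat O$ and using linearity of expectation then gives
\begin{equation}
\underset{\ket{\phi}\in S_a}{\mathbf{E}}\bra{\phi}\hat O\ket{\phi}\ =\ \frac{\Tr\hat O}{2^{n}}\ =\ \underset{\ket{\phi}\in S_b}{\mathbf{E}}\bra{\phi}\hat O\ket{\phi}.
\end{equation}
On the other hand, the assumed sign conditions on $\hat O$, together with the fact that $\ket{\phi}\mapsto\bra{\phi}\hat O\ket{\phi}$ is continuous and $U^{\otimes m}$ is compact (so the Haar probability measure is well defined and assigns mass to every nonempty open set), force the $S_a$-average to be strictly positive and the $S_b$-average to be non-positive. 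This contradicts the equality above and completes the proof.

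The only point requiring care is the strict positivity of the $S_a$-average: one needs that a strictly positive continuous function on the compact orbit has strictly positive Haar integral. This follows because $S_a$ is the continuous image of the compact group $U^{\otimes m}$ under $\mathbf{u}\mapsto\mathbf{u}\ket{\psi_a}$, and the pushforward of the Haar measure is supported on all of $S_a$, so $\int_{S_a}f\,d\mu>0$ whenever $f>0$. No other subtleties arise; the rest is algebraic manipulation of the twirl, so the main conceptual step is recognizing that Proposition~\ref{prop1} lifts to the product group block-wise and makes the two orbit averages equal.
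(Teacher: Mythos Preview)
Your proof is correct and follows essentially the same Haar-twirling strategy as the paper: both arguments reduce to showing that the orbit average of $\bra{\phi}\hat O\ket{\phi}$ equals $\Tr\hat O/2^{n}$ independently of the base state, which is incompatible with the strict sign separation. The only cosmetic difference is that you twirl the state $\rho$ block-wise while the paper twirls the operator $\hat O$ (dual pictures of the same computation), and your positivity-of-the-integral argument plays the same role as the paper's attained infimum/supremum.
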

\begin{proof}
The proof is adapted from Lemma~9 in Ref~\cite{huang:2022}, by combining it with the Proposition~\ref{prop1}. We elaborate the idea here. The result is established by contradiction. Suppose an operator $\hat O$ exists for $S_a$ and $S_b$ such that $\bra{\phi}\hat O\ket{\phi}>0,\ \forall \ket{\phi}\in S_a$ and $\bra{\phi}\hat O\ket{\phi}\leq 0,\ \forall \ket{\phi}\in S_b$. Next, we average over the Haar measure $\mu$ of $U$. From Proposition~\ref{prop1} this yields
\begin{equation}
  \underset{\mu(u)}{\mathbf{E}}   u_1^{\dag}\otimes u_2^{\dag}\otimes\cdots \otimes u_m^{\dag} \hat O u_1\otimes u_2\otimes\cdots \otimes u_m =\frac{\Tr\hat O}{2^n}I. 
\end{equation}
This follows by decomposing $\hat O$ into a linear combination of the basis operators, each of which is a tensor product of local operators supported on each qubit. Note that for an operator $\hat o$, we can always define a basis in the Hilbert space such that $\hat o = \sum_{ij}o_{ij}\ketbra{i}{j}$ and $o_{ij}$ is a matrix representation of $\hat o$, for which Proposition~\ref{prop1} applies. Since Haar integration is a linear map, Proposition~\ref{prop1} can be applied on each basis operator. Now we can define
\begin{align}
    o_a &\coloneqq \text{Inf}\{\bra{\phi}\hat O\ket{\phi} | \ket{\phi}\in S_a\},\\
    o_b &\coloneqq \text{Sup}\{\bra{\phi}\hat O\ket{\phi} | \ket{\phi}\in S_b\}.
\end{align}
Note that $U$ is compact and therefore closed. Hence, the infimum and supremum can be attained by some elements in $S_a$ and $S_b$, respectively. By definition, for $\ket{\psi_a}\in S_a$ and $\ket{\psi_b}\in S_b$ we have
\begin{align}
    \frac{\Tr\hat O}{2^n}=\underset{\mu(u)}{\mathbf{E}}  \bra{\psi_a} u_1^{\dag}\otimes u_2^{\dag}\otimes\cdots \otimes u_m^{\dag} \hat O u_1\otimes u_2\otimes\cdots \otimes u_m\ket{\psi_a}\geq o_a,\\
    \frac{\Tr\hat O}{2^n}=\underset{\mu(u)}{\mathbf{E}}  \bra{\psi_b} u_1^{\dag}\otimes u_2^{\dag}\otimes\cdots \otimes u_m^{\dag} \hat O u_1\otimes u_2\otimes\cdots \otimes u_m\ket{\psi_b}\leq o_b.
\end{align}
Now since $o_a>0$ and $o_b\leq 0$ by the assumption, we arrive at a contradiction that $\frac{\Tr\hat O}{2^n}\leq o_b < o_a \leq \frac{\Tr\hat O}{2^n}$.
\end{proof}

\begin{lemma}\label{sm:lemma1}
Consider a time-reversal symmetry represented by $\prod_i X_i K$ and two $n$-qubit states $\ket{\psi_a},\ket{\psi_b}$, there exists no  operator $\hat O$ such that $\bra{\phi} \hat O\ket{\phi} > 0\ \forall \ket{\phi}\in S_a$ and $\bra{\phi} \hat O\ket{\phi}\leq 0\ \forall \ket{\phi}\in S_b$, for the set $S_a = \{u_1\otimes u_2\otimes\cdots \otimes u_m\ket{\psi_a}|u_i\in Q\}$ and the set $S_b = \{u_1\otimes u_2\otimes\cdots \otimes u_m\ket{\psi_b}|u_i\in Q\}$, where $u_i$ acts on neighboring two qubits and $n = 2m$ and $Q$ is a symmetric unitary Lie group generated by $P=\{iZ_1, iZ_{2},iZ_1Y_{2},iY_1Z_{2}, iZ_1X_{2},iX_1Z_{2}  \}$.
\end{lemma}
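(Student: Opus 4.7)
The lemma is essentially a direct application of Proposition~\ref{prop2} with the compact unitary group $U$ taken to be the two-qubit symmetric Lie group $Q$, so that $k=2$ and $m=n/2$. My plan is to reduce to that statement and then verify its two hypotheses for $Q$: (i) compactness, and (ii) triviality of the centralizer inside $\mathbf{C}_{4\times 4}$.

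Compactness is essentially free: $Q$ is the connected Lie subgroup of $U(4)$ generated by the anti-Hermitian elements of $P$, and since $U(4)$ is compact, $Q$ is closed and hence compact as a closed subgroup of a compact group. So the only substantive task is the centralizer computation, which is what I would carry out carefully. Since $Q$ is connected, its centralizer in $\mathbf{C}_{4\times 4}$ coincides with the commutant of its Lie algebra, so I need only find the matrices that commute with every element of $P$.

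For the centralizer, I would expand an arbitrary $M \in \mathbf{C}_{4\times 4}$ in the two-qubit Pauli basis,
\begin{equation}
M=\sum_{a,b\in\{0,1,2,3\}} c_{ab}\,\sigma_a\otimes\sigma_b,
\end{equation}
and impose commutation with the generators one at a time. Requiring $[M,Z_1]=0$ eliminates all terms with $\sigma_a\in\{X,Y\}$, and requiring $[M,Z_2]=0$ eliminates all terms with $\sigma_b\in\{X,Y\}$, leaving only
\begin{equation}
M = c_{00}\,I\!I + c_{03}\,I Z + c_{30}\,Z I + c_{33}\,Z Z.
\end{equation}
Next I compute $[M,Z_1Y_2]$: the $II$ and $ZI$ terms commute with $ZY$, while the $IZ$ and $ZZ$ terms each produce a nonzero contribution proportional to $Z\otimes X$ and $I\otimes X$ respectively, forcing $c_{03}=c_{33}=0$. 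An analogous computation with $Y_1Z_2$ forces $c_{30}=0$, so $M=c_{00}I$, i.e.\ $C_Q\subseteq\{\lambda I\mid\lambda\in\mathbf{C}\}$. With both hypotheses verified, Proposition~\ref{prop2} applied to $U=Q$ yields the lemma.

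I expect no conceptual difficulty: the only obstacle worth flagging is bookkeeping in the centralizer step, where one must check enough of the six generators in $P$ to kill all non-identity coefficients. The two remaining generators $Z_1X_2$ and $X_1Z_2$ are not needed once the two ``diagonal'' and two ``off-diagonal'' generators above have been used; I would mention this explicitly to reassure the reader that the constraints from $P$ are already over-determined for the centralizer to collapse to scalars.
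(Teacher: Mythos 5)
Your proposal is correct and follows essentially the same route as the paper: reduce to Proposition~\ref{prop2} with $U=Q$, $k=2$, and show the centralizer of $Q$ in $\mathbf{C}_{4\times 4}$ is trivial by expanding in the two-qubit Pauli basis, using $Z_1,Z_2$ to restrict to $\mathrm{span}\{I,Z_1,Z_2,Z_1Z_2\}$ and then $Z_1Y_2$, $Y_1Z_2$ to eliminate the remaining non-identity terms. One small caveat: a connected Lie subgroup of a compact group need not be closed in general (consider an irrational winding of a torus), so your one-line compactness argument is not airtight as stated --- although $Q$ is in fact compact here (its Lie algebra is $\mathfrak{su}(2)\oplus\mathfrak{su}(2)$, hence semisimple, so the connected subgroup is closed), and the paper itself simply takes compactness for granted.
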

\begin{proof}
Let $A\in \mathbf{C}_{4\times 4}$, note that $A$ commutes with all the elements of $Q$ if and only if $[A, p] = 0\ \forall p\in P$. Since $A$ can be decomposed into a linear combinations of $\sigma_1\sigma_2$, with $\sigma_1,\sigma_2\in\{I,X,Y,Z\}$. We first consider $p = iZ_1$ and $iZ_2$. Commuting with Pauli-$Z$ on two sites individually implies $A$ is a linear combination of $I, Z_1, Z_2, Z_1Z_2$. Since $A$ also commutes with $Y_1Z_2$ and $Z_1Y_2$, $A$ has to be proportional to the identity. Then by Proposition~\ref{prop2} we finish the proof. 
\end{proof}

If we consider any two states in the set of three fixed points considered in the main text, Lemma~\ref{sm:lemma1} implies there exists no QCNN that can be used to classify the time-reversal symmetric phases. 
We now proceed to prove non-existence results for phases protected by $\mathbf{Z}_2\times \mathbf{Z}_2^T$ and $\mathbf{Z}_2\times \mathbf{Z}_2$ using a similar idea. We first prove a non-existence result for distinguishing the SPT and the SB phases in systems protected by the symmetry $Z_2\times Z_2^T$ generated via a global spin flip and the complex conjugation. More precisely, we have the following
\begin{lemma}\label{sm:lemma2}
Let $\ket{\psi},\ket{\psi'}\in\{ \ket{\psi_{\text{SB}}},\ket{\psi_{\text{SPT}}}\}$ as defined in the main text and $\ket{\psi}\neq\ket{\psi'}$. Consider the set $S = \{u_1\otimes u_2\otimes\cdots \otimes u_m\ket{\psi}|u_i\in U\}$ and $S' = \{u_1\otimes u_2\otimes\cdots \otimes u_m\ket{\psi'}|u_i\in U\}$, where $u_i$ acts on neighboring three qubits and $n = 3m$. $U$ is a unitary Lie group generated by all the 3-qubit Pauli strings symmetric under $\mathbf{Z}_2\times \mathbf{Z}_2^T$ symmetry generated by the global spin flip and complex conjugation. There does not exist an operator $\hat O$ such that $\bra{\phi} \hat O\ket{\phi} > 0\ \forall \ket{\phi}\in S$ and $\bra{\phi} \hat O\ket{\phi} \leq  0\ \forall \ket{\phi}\in S'$. Furthermore, there does not exist a Hermitian operator $\hat D$ with $\Tr\hat D = 0$ and the number of supported qubits $n_{\hat D}<n$ such that $\bra{\phi} \hat D\ket{\phi} \neq 0\ \forall \ket{\phi}\in S$ or $\forall \ket{\phi}\in S'$.
\end{lemma}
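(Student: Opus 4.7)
The plan is to adapt the Haar-twirl strategy of Propositions~\ref{prop1} and~\ref{prop2} to this symmetry, with the new wrinkle that the three-qubit symmetric group $U$ has a nontrivial centralizer. First I would identify its Pauli generators: a three-qubit Pauli is $\mathbf{Z}_2\times\mathbf{Z}_2^T$-symmetric iff it has an odd number of $Y$ sites (so it is fixed by $K$) and an even total number of $Y$ and $Z$ sites (so it commutes with $X^{\otimes 3}$). Enumerating gives exactly twelve generators: the six permutations of $XYZ$ and the six permutations of $IYZ$. A short symplectic-commutator calculation then shows that the centralizer of $U$ in $\mathbf{C}_{8\times 8}$ is the two-dimensional space $\mathrm{span}\{I,\,X_1 X_2 X_3\}$, so Proposition~\ref{prop2} cannot be used as a black box.

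Upgrading Proposition~\ref{prop1} correspondingly, the single-block Haar twirl maps $M$ to $\frac{\Tr M}{8}I + \frac{\Tr(X^{\otimes 3}M)}{8}X^{\otimes 3}$, and iterating blockwise gives, for any $n$-qubit operator $\hat A$ and its twirl $\bar A$ under $U^{\otimes m}$,
\begin{equation}
\bar A = \sum_{T\subseteq\{1,\ldots,m\}} c_T\, X^{K_T},\qquad c_T = \frac{\Tr(X^{K_T}\hat A)}{2^n},
\end{equation}
with $K_T$ the union of the three-qubit blocks indexed by $T$, because the block-twirl kills every Pauli factor except $I$ and $X^{\otimes 3}$. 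It remains to evaluate $\bra{\psi}\bar A\ket{\psi}$ at the two fixed points. For $\ket{\psi_{\mathrm{SB}}}=(\ket{0\cdots 0}+\ket{1\cdots 1})/\sqrt 2$ a direct computation gives $\bra{\psi_{\mathrm{SB}}}X^{K_T}\ket{\psi_{\mathrm{SB}}}=1$ if $K_T\in\{\varnothing,\{1,\ldots,n\}\}$ and $0$ otherwise. For $\ket{\psi_{\mathrm{SPT}}}=\ket{CS}$ with appropriate (e.g.\ periodic) boundary conditions, the only all-$X$ stabilizers are $I,\,X^{\otimes n}$ and the sublattice products $\prod_{i\text{ odd}}X_i,\,\prod_{i\text{ even}}X_i$; the latter two are never unions of three-qubit blocks, so the same formula holds. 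Hence
\begin{equation}
\bra{\psi_{\mathrm{SB}}}\bar A\ket{\psi_{\mathrm{SB}}} = \bra{\psi_{\mathrm{SPT}}}\bar A\ket{\psi_{\mathrm{SPT}}} = c_\varnothing + c_{\{1,\ldots,m\}}.
\end{equation}

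Part~1 then follows as in Proposition~\ref{prop2}: the two Haar averages above are equal, but a separator $\hat O$ would force $o_a \le \bra{\psi_{\mathrm{SB}}}\bar O\ket{\psi_{\mathrm{SB}}}$ and $\bra{\psi_{\mathrm{SPT}}}\bar O\ket{\psi_{\mathrm{SPT}}} \le o_b$ with $o_a>0$ and $o_b\le 0$, a contradiction. For part~2, the trace hypothesis yields $c_\varnothing=\Tr\hat D/2^n=0$, and the support hypothesis $n_{\hat D}<n$ forces $c_{\{1,\ldots,m\}}=0$ since any Pauli in the decomposition of $\hat D$ must act as $I$ on some unsupported qubit $q$, whereas $X^{\otimes n}$ acts as $X$ there. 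Thus $\bra{\psi}\bar D\ket{\psi}=0$ on both fixed points, contradicting the fixed sign of $\bra{\phi}\hat D\ket{\phi}$ guaranteed by continuity and connectedness of $U^{\otimes m}$. The main obstacle I anticipate is precisely this enlarged centralizer: the naive analogue of Lemma~\ref{sm:lemma1} fails because the $X^{\otimes 3}$ mode survives the twirl, and handling it is exactly why the trace and support hypotheses enter in part~2, while the common $X^{\otimes n}$-eigenvalue of the two fixed points is what rescues part~1.
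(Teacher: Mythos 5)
Your proof follows essentially the same route as the paper's: compute the commutant of the symmetric three-qubit group as $\mathrm{span}\{I, X_1X_2X_3\}$, twirl the candidate observable blockwise so that only $I$/$X^{\otimes 3}$ block-products survive, show that both fixed points assign the same value to every such string, and derive the contradictions exactly as in Proposition~2 (using tracelessness, limited support, and connectedness for the second part). Your explicit check that the sublattice products $\prod_{i\,\mathrm{odd}}X_i$ and $\prod_{i\,\mathrm{even}}X_i$ --- which are cluster-state stabilizers of support $n/2<n$ --- can never arise as unions of three-qubit blocks is a welcome extra precision that the paper's own argument leaves implicit.
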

\begin{proof}
We use a similar idea of the proofs from above. Again we will establish the proof by contradiction. To prove the first statement, we suppose such $\hat O$ exists.
\\
Let $U$ be the Lie group generated by all the anti-Hermitian Pauli matrices symmetric under $\mathbf{Z}_2\times \mathbf{Z}_2^T$, namely up to permutation we have $iXYZ, iYZX, \cdots$ and $iZYI, iZIY,\cdots$. Since $U$ is a compact Lie group, for any $8\times 8$ complex matrix $M$ we can define
\begin{equation}
     A  =\int u M u^{\dag} d\mu(u),
\end{equation}
where $\mu$ is the Haar measure of $U$. Consequently, $[A, u] = 0$ for $u\in U$. This is true if and only if $A$ commutes with all the generators of $U$, which implies
\begin{equation}\label{eq:average_z2z2T}
    A = c_0 I + c_1 X_1X_2X_3,
\end{equation}
for some $c_0,c_1\in\mathbf{C}$. Next, we consider the average
\begin{equation}
    \hat O' = \underset{\mu(u)}{\mathbf{E}} u_1\otimes u_2\otimes\cdots \otimes u_m\hat O u_1^{\dag}\otimes u_2^{\dag}\otimes\cdots \otimes u_m^{\dag}.
\end{equation}
$\hat O$ can be decomposed into a sum of at most $4^n$ Pauli strings. From Eq.~\eqref{eq:average_z2z2T} we deduce that $\hat O'$ is a linear combination of Pauli strings of $I$ or $X$. We note that any Pauli strings of $I$ and $X$ evaluate to the same value in $\ket{\psi_{\text{SPT}}}$ and $\ket{\psi_{\text{SB}}}$, so $\bra{\psi_{\text{SPT}}}\hat O'\ket{\psi_{\text{SPT}}}=\bra{\psi_{\text{SB}}}\hat O'\ket{\psi_{\text{SB}}}$. We can now apply exactly the same reasoning as in the proof of Proposition~\ref{prop2} to prove the first statement in the lemma.

Next  we assume there exists an operator $\hat D$ with $\Tr\hat D = 0$ and the number of supported qubits $n_{\hat D}<n$ such that $\bra{\phi} \hat D\ket{\phi} \neq 0\ \forall \ket{\phi}\in S$ or $\forall \ket{\phi}\in S'$.  Since $\Tr\hat D = 0$ implies its Haar average $\hat D'$ satisfies $\Tr\hat D' = 0$. $\hat D'$ is either 0 or consists of Pauli strings that have at least one $X$. Now knowing the support of $\hat D'$ is smaller than the total number of qubits in the system, we have $\bra{\psi_{\text{SPT}}} \hat D'\ket{\psi_{\text{SPT}}}=\bra{\psi_{\text{SB}}}\hat D'\ket{\psi_{\text{SB}}}=0$.
\\
Finally, we note that $U$ is connected and $\bra{\phi} \hat D\ket{\phi}$ is real-valued, it follows that there exists some $u_1\otimes u_2\otimes\cdots \otimes u_m$ that attains the mean value, resulting in a contradiction since there is some $\ket{\phi}\in S$ and $S'$ such that $\bra{\phi} \hat D\ket{\phi}=0$.
\end{proof}

The lemma suggests that there exists no QCNN we can use to distinguish or recognize the SPT and the SB phases protected by  $\mathbf{Z}_2\times \mathbf{Z}_2^T$. Note that the requirement $\Tr\hat D = 0$ is easy to meet. Since we can always remove the trace of an operator by redefining $\hat D - \frac{\Tr\hat D}{2^n}I$. The identity shift will not be relevant to characterize different phases if $\Tr\hat D/2^n\to 0$ happens faster than $\langle \hat D\rangle\to 0$ as $n_{\hat D}\to\infty$. For the QCNN described in the main text, we can distinguish the phases by choosing $\hat D$ to be the observable $\hat D = \hat U(\ketbra{s}{s}-\ketbra{s'}{s'})\hat U^{\dag}$ for bitstrings $s,s'$, where $\hat U$ is the QCNN circuit. We automatically have $\Tr\hat D = 0$.

Next, we proceed to the case of $\mathbf{Z}_2\times\mathbf{Z}_2$, generated by the spin flip on the even and the odd sites in the system. Note that the cluster state $\ket{\psi_{\text{SPT}}}$ we defined in the main text has an SPT order under the symmetry. We prove that a QCNN cannot recognize the SPT phase if no other symmetries are present.
\begin{lemma}\label{sm:lemma3}
Let $\ket{\psi_{\text{SPT}}}$ be the $n$-qubit cluster state defined in the main text. Consider the set $S = \{u_1\otimes u_2\otimes\cdots \otimes u_m\ket{\psi_{\text{SPT}}}|u_i\in U\}$, where $u_i$ acts on neighboring two qubits and $n = 2m$. The group $U = \{e^{i\alpha I + i\beta X_1 + i\gamma X_2 + i\delta X_1X_2}|\alpha,\beta, \gamma,\delta\in\mathbf{R}\}$ is symmetric under $Z_2\times Z_2$ symmetry generated by spin flip on even/odd sites. There does not exist a Hermitian operator $\hat O$ with $\Tr\hat O = 0$ and the number of supported qubits $n_{\hat O}<n/2$ such that $\bra{\phi} \hat O\ket{\phi} \neq 0\ \forall \ket{\phi}\in S$.
\end{lemma}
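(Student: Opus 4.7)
The plan is to argue by contradiction, following the Haar-averaging strategy of Lemmas~\ref{sm:lemma1} and~\ref{sm:lemma2}. Assume a Hermitian $\hat O$ with $\Tr\hat O=0$ and support size $n_{\hat O}<n/2$ satisfies $\bra{\phi}\hat O\ket{\phi}\neq 0$ for every $\ket{\phi}\in S$. I will show that the Haar-averaged operator
\begin{equation}
\hat O'=\underset{\mu(u)}{\mathbf{E}}\,(u_1\otimes\cdots\otimes u_m)\,\hat O\,(u_1^{\dag}\otimes\cdots\otimes u_m^{\dag})
\end{equation}
satisfies $\bra{\psi_{\text{SPT}}}\hat O'\ket{\psi_{\text{SPT}}}=0$, which after rewriting the sandwich as the mean of $\bra{\phi}\hat O\ket{\phi}$ over $S$ will contradict a sign argument based on the connectedness of $U$.

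The first step is to pin down the structure of $\hat O'$. Since $U$ is generated by the mutually commuting operators $X_1$, $X_2$, $X_1X_2$, its centralizer in $\mathbf{C}_{4\times 4}$ is precisely $\mathrm{span}\{I, X_1, X_2, X_1X_2\}$. Expanding $\hat O$ in the Pauli basis and applying Proposition~\ref{prop1} pair by pair, $\hat O'$ becomes a linear combination of Pauli strings whose non-identity entries are all $X$. Moreover, since each $u_k$ acts only on the pair $(2k-1,2k)$, the support of $\hat O'$ is contained in the pair-padded set $P$, defined as the union of all pairs $(2k-1,2k)$ that meet $\mathrm{supp}(\hat O)$; because each pair contributes either $0$ or $2$ sites to $P$, we have $|P|\leq 2n_{\hat O}<n$.

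The second step is to evaluate $\bra{\psi_{\text{SPT}}}X_A\ket{\psi_{\text{SPT}}}$ for pure-$X$ strings with $A\subseteq P$. Using the cluster-state stabilizers $K_i=-Z_{i-1}X_iZ_{i+1}$, a product $\prod_{i\in B}K_i$ reduces to a pure-$X$ operator iff the indicator $\chi_B$ is $2$-periodic, which yields exactly four pure-$X$ stabilizers $I$, $X_{\text{odd}}=\prod_{i\text{ odd}}X_i$, $X_{\text{even}}=\prod_{i\text{ even}}X_i$, and $X_{\text{all}}=\prod_iX_i$, with supports $0$, $n/2$, $n/2$, and $n$. Since $\ket{\psi_{\text{SPT}}}$ is orthogonal to any non-stabilizer Pauli string, only these four values of $A$ give a nonzero expectation; the three nontrivial ones would require every even or every odd site to lie in $P$, which by the pair structure forces $|P|=n$ and contradicts $|P|<n$. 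Hence only the identity coefficient survives, and it equals $\Tr(\hat O')/2^n=\Tr(\hat O)/2^n=0$.

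Finally, I would rewrite $\bra{\psi_{\text{SPT}}}\hat O'\ket{\psi_{\text{SPT}}}$ as the Haar mean $\underset{\mu(u)}{\mathbf{E}}\bra{\phi_u}\hat O\ket{\phi_u}$ with $\ket{\phi_u}=(u_1^{\dag}\otimes\cdots\otimes u_m^{\dag})\ket{\psi_{\text{SPT}}}\in S$, and close with a continuity argument: because $U$ is connected, $S$ is path-connected, so the real continuous function $\phi\mapsto\bra{\phi}\hat O\ket{\phi}$ cannot change sign on $S$; combined with the non-vanishing hypothesis, its Haar mean is strictly positive or strictly negative, contradicting the vanishing established above. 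The main obstacle will be the pair-padding bookkeeping: one must carefully enumerate the pure-$X$ stabilizers of $\ket{\psi_{\text{SPT}}}$ and verify that none with nontrivial support can fit inside $P$. This is where the hypothesis $n_{\hat O}<n/2$ plays its decisive role, reflecting how the even/odd structure of the cluster-state stabilizers couples with the two-qubit nature of the gates in $U$.
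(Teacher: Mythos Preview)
Your proposal is correct and follows essentially the same Haar-averaging strategy as the paper's proof. Two small remarks: Proposition~\ref{prop1} as stated requires a trivial centralizer, so here you are really using only its proof idea (the Haar average lands in the commutant); and because $U$ is abelian and generated by Paulis, the Haar average of any Pauli string either survives unchanged or vanishes, so in fact $\mathrm{supp}(\hat O')\subseteq\mathrm{supp}(\hat O)$ and $n_{\hat O'}<n/2$ directly---the paper uses this, making your pair-padding detour correct but unnecessary.
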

\begin{proof}
The proof is basically the same as Lemma~\ref{sm:lemma2}. We prove by contradiction. Suppose such $\hat O$ exists.
\\
Since $U$ is a compact Lie group, for any $4\times 4$ complex matrix $M$ we can define
$
     A  =\int u M u^{\dag} d\mu(u),
$
where $\mu$ is the Haar measure of $U$. Consequently, $[A, u] = 0$ for $u\in U$. This implies
\begin{equation}\label{eq:average_z2z2}
    A = c_0 I + c_1 X_1 + c_2 X_2 + c_3 X_1X_2,
\end{equation}
for some $c_0,c_1,c_2,c_3\in\mathbf{C}$. Next, we define
$
    \hat O' = \underset{\mu(u)}{\mathbf{E}} u_1\otimes u_2\otimes\cdots \otimes u_m\hat O u_1^{\dag}\otimes u_2^{\dag}\otimes\cdots \otimes u_m^{\dag}.
$
Again, we can decompose $\hat O$ in the basis of $4^{n_{\hat O}}$ Pauli strings. From Eq.~\eqref{eq:average_z2z2}, we deduce that $\hat O'$ is a linear combination of Pauli strings of $I$ or $X$. Since $\Tr\hat O = 0$ implies $\Tr\hat O' = 0$, $\hat O'$ contains Pauli strings that have at least one $X$. Now knowing the support of $\hat O'$ is smaller than half the total number of qubits in the system, the Pauli strings do not contain any symmetry of the system (i.e. $\mathbf{Z}_2\times\mathbf{Z}_2$). When they are applied to $\ket{\psi_{\text{SPT}}}$, the resulting state necessarily violates at least one of the cluster couplings so that $\hat O'\ket{\psi_{\text{SPT}}}$ is a linear combination of excited states for the cluster Hamiltonian and orthogonal to $\ket{\psi_{\text{SPT}}}$, we have $\bra{\psi_{\text{SPT}}} \hat O'\ket{\psi_{\text{SPT}}}=0$. Using the connectedness of $U$ and knowing $\bra{\phi} \hat O\ket{\phi}$ is real-valued, we know there exists some $u_1\otimes u_2\otimes\cdots \otimes u_m$ that attains the mean value, resulting in a contradiction, in that there is some $\ket{\phi}\in S$ such that $\bra{\phi} \hat O\ket{\phi}=0$.
\end{proof}

\section{Interpretation of the Lemmas for string order parameters}\label{sm:sec:so_interpretation}

\begin{figure}
    \centering
    \includegraphics[scale = 0.8]{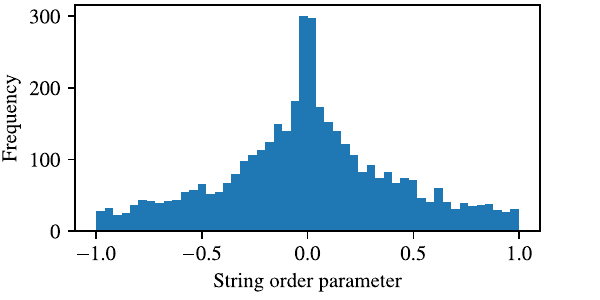}
    \caption{A histogram that shows the evaluation of a string order parameter $Z_1Y_2\left(\prod_{i=3}^{i=6} X_i\right)Y_7Z_8$ on a cluster state subject to two layers of $\mathbf{Z}_2\times\mathbf{Z}_2^T$ symmetric noise. A total of 4000 samples are taken.}
    \label{sm:fig:hist}
\end{figure}

In this section, we clarify how to interpret the lemmas we prove in the context of physical observables. The lemmas show that an observable that perfectly classifies the phases cannot exist in general without additional symmetries as such translational invariance. This forms a non-existence result for an observable that extracts the topological invariant of the phases (without utilizing multiple copies of the system). However, the statement can be overly restrictive in the context of string order parameters. 

The lemma says, that there does not exist a string order parameter that is zero, say in phase $A$ and strictly non-zero in a different phase $B$, i.e. there always exist some states in phase $B$ where the string order parameter is zero. The set of states in phase $B$ that have identically zero string order is measure-zero. This is because the vanishing of a string order parameter relies on a set of selection rules~\cite{Ppollmann:2012} when the system is in phase $A$. Without the selection rules a string order parameter is generically non-zero. To verify this, we consider a cluster state at the thermodynamic limit and apply two layers of $\mathbf{Z}_2\times\mathbf{Z}_2^T$ symmetric noise (each two-qubit gate is sampled as in one of the previous sections, but each two-qubit gate within a noise layer is now independently sampled). A string order parameter $Z_1Y_2\left(\prod_{i=3}^{i=6} X_i\right)Y_7Z_8$ of length 8 is then measured. Without any noise, the string order parameter attains +1 at the cluster state. With the noise, the string order parameter can attain any values between +1 and -1, as shown in Fig.~\ref{sm:fig:hist}. In particular, only for some instances the string order parameter is strictly zero in noisy cluster states. Moreover, due to the randomness of the noise, the average string order parameter is zero. However, the variance of it remains finite in the noisy cluster state.

This suggests that a string order parameter can still be used to classify phases in practice, up to a small fraction of states whose string order is vanishingly small (e.g. the ones near the center of Fig.~\ref{sm:fig:hist}). 

It is worth noting that, even though a string order parameter may exist and work well, we can neither directly use the string order to construct a cost function for optimization of a QCNN nor it is obvious how a QCNN that is linear in its input state and makes prediction based on a majority vote among the final measurement outcomes could reproduce the behavior of the string order parameter. An exception will be the existence of a string order that is zero in one phase and always non-negative (zero is only attained on a measure-zero set) in the other phase. However, this is excluded by the non-existence results we proved. When TI is enforced, it might be possible to find  such string order parameters for unitary on-site symmetries. In this case, QCNNs that learn them could exist. However, finding a string order parameter is not so simple for SPTs protected only by time-reversal symmetry. In the next section we will show that, for both unitary on-site symmetries and (antinunitary) time-reversal symmetry, there exists a more complicated non-local order parameter and it detects the topological invariant of the SPTs (e.g. it yields +1/-1 in different phases). Therefore, a QCNN can in principle be found.

\section{Circumventing the non-existence results}\label{sm:sec:avoidnogo}
\begin{figure}
    \centering
    \includegraphics{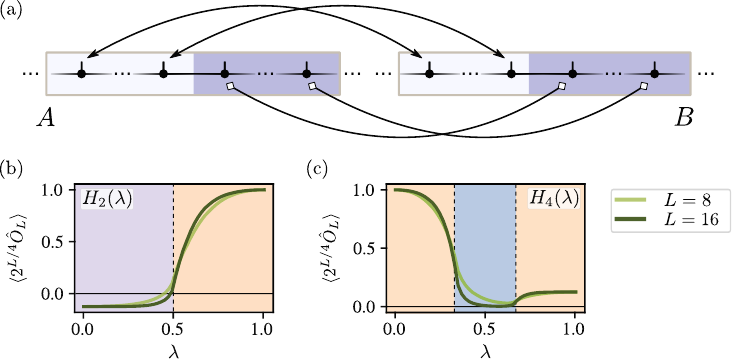}
    \caption{In (a) we illustrate the order parameter for the TI time-reversal systems. The solid circles represent the qubits on the chain. The double-arrow represents an operator $R = \frac{1}{2}\left(\ketbra{01}{01}+\ketbra{10}{10}+\ketbra{01}{10}+\ketbra{10}{01}\right)$. The double-square is a SWAP operator. The operator $R$ is applied between all the pairs across the light blue region of $A,B$, and the SWAP operator is applied between all the pairs across the light purple region of $A,B$. In (b) and (c), we verify the order parameter for TI time-reversal symmetric systems. We evaluate the order parameter on the ground states of $H_2$ and $H_4$ as defined in the main text. The order parameter is multiplied by $2^{L/4}$ for an easy comparison between cases with different $L$.}
    \label{fig:sm:TIop}
\end{figure}
In the main text, we claim that imposing a translational symmetry in additional to the time-reversal symmetry allows one to avoid the non-existence result by Lemma~\ref{sm:lemma1}. To see this, let us first consider the case between the trivial and the SPT phases. 
Suppose $A,B$ are two disjoint connected subregions in the unique ground state of a local and gapped 1D Hamiltonian. If $A,B$ are separated from each other further than the correlation length of the system, the reduced density matrix satisfies $\rho_{AB}\approx \rho_A\otimes \rho_B$. In TI system, we can choose the two separated regions to be identical copies. In Ref.~\cite{Ppollmann:2012}, an order parameter that dictates the topological invariants and requires two copies of the state is proposed for time-reversal systems: it attains different signs for the trivial and the SPT phases, and it vanishes in the SB phases. We can therefore use it for the phase classification task, as subsystems $A$ and $B$ serve as two copies of each other. A schematic diagram of this TI order parameter is shown in Fig.~\ref{fig:sm:TIop}a. In the case of the SB phases, we can always find a local field $\Delta_i$ such that $\bra{\psi}\Delta_i\ket{\psi}\neq 0$ ($\bra{\psi}\Delta_i\ket{\psi}= 0$) if $\ket{\psi}$ breaks (respects) the symmetry. For TI-SB systems, the phase can be probed by a non-zero value of $\Delta_i\Delta_{i+L}$, which vanishes in the trivial and the SPT phases for $L\to\infty$. The same observable can also be used to avoid the non-existence results in Lemma~\ref{sm:lemma2} for $\mathbf{Z}_2\times \mathbf{Z}_2^T$ with TI. While the observable constructed above has $\Tr\hat O\neq 0$ in general, by enlarging the SWAP operator sequence in Fig.~\ref{fig:sm:TIop}a, we can always define it in a way such that $\Tr\hat O/2^n\to 0$ happens exponentially faster than $\langle \hat O\rangle\to 0$ when it is evaluated on either SPT or the trivial phase as $n\to\infty$.

We numerically verify the order parameter for the TI time-reversal SPT in Fig.~\ref{fig:sm:TIop}b and \ref{fig:sm:TIop}c. The order parameter $\hat O_L$ of length $L$ consists of two operators proposed in Ref.~\cite{Ppollmann:2012} applied to two equally sized subsystems $A,B$ that are next to each other on the chain. We checked that taking $A,B$ apart from each other has little effects on the expectation value in the cases for which we will test it. We normalize the order parameter such that $||\hat O_L|| = 1$, where $||\cdot||$ is the operator norm. We evaluate the operator on $H_2$ and $H_4$ defined in the main text, with $L = 8$ and $16$. When $L$ is much larger than the correlation length of the system, $\hat O_L$ evaluates to 0 in the SB phase, and $\pm(\Tr\Lambda^4)^3/2^{L/4}$ in the trivial and the SPT phase, respectively. Here $\Lambda$ is the diagonal matrix whose diagonal entries are the Schmidt values obtained by cutting the chain into half. For cluster-state-like ground states, $\Lambda$ has two equal entries $1/\sqrt{2}$, we therefore expect $\langle 2^{L/4}\hat O_L\rangle = -1/8$. For a product state, $\Lambda$ only has one entry with value 1, we have  $\langle 2^{L/4}\hat O_L\rangle = 1$. When the entanglement-cut crosses a singlet, $\Lambda$ is doubly degenerate leading to $\langle 2^{L/4}\hat O_L\rangle = 1/8$. The plots give consistent results at these limits, as well as reasonable prediction of the critical points.

For the on-site symmetry $\mathbf{Z}_2\times \mathbf{Z}_2$ with TI, the trivial and SPT phases can be detected by a traceless observable proposed in Ref.~\cite{haegeman:2012}. The TI circumvents the non-existence result by Lemma~\ref{sm:lemma3}.

\section{Making predictions with weak disorder}
\label{sm:sec:disorder}
\begin{figure}
    \centering
    \includegraphics[scale = 0.9]{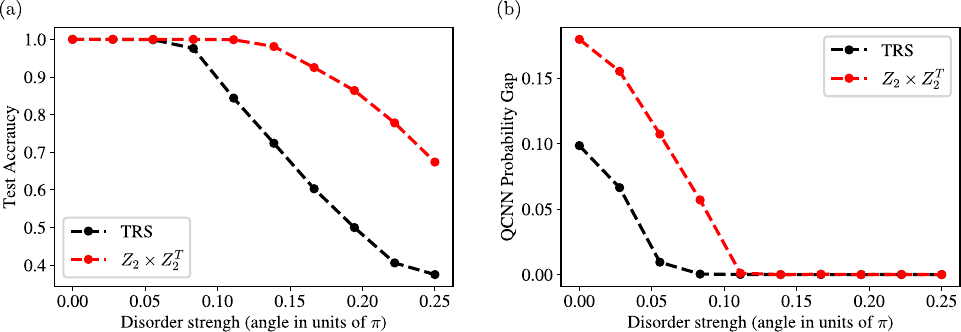}
    \caption{Predictions of the pre-trained 8-qubit QCNNs states generated with spatially disordered perturbation. (a) The test accuracy over 1000 test states generated by perturbing the fixed-points with one layer of independent two-qubit symmetric noise. Each noisy unitary is parametrized as described in Section~\ref{sm:sec:details}, with each angle parameter uniformly sampled from $[-\theta,\theta]$, where $\theta\geq 0$ characterizes the disorder strength (the $x$-axis). TRS stands for time-reversal symmetry. (b) Smallest difference between the probability of the predicted phase and the rest of the probabilities. A finite probability gap ensures some stability of the prediction against weak disorder.}
    \label{sm:fig:disorder}
\end{figure}
The non-existence results in Section~\ref{sm:sec:nogo} suggest that if the system is perturbed with noise that is not translationally invariant, the trained QCNNs in Section~\ref{sm:sec:details} and~\ref{sm:sec:othersym} will not be able to retain their high prediction accuracy even when the system only has a very short correlation length. However, when the breaking of the translation symmetry is weak, namely the perturbation is close to a symmetric operator, we still expect certain level of robustness. In particular, the robustness is characterized by a finite difference between the probability of the predicted phase and the other probabilities in the output distribution, i.e., by a finite probability gap. 

In Fig.~\ref{sm:fig:disorder}, we study this using the already-trained QCNNs from Section~\ref{sm:sec:details} and~\ref{sm:sec:othersym} to make prediction on the fixed-points perturbed by disordered noise. When the probability gap (Fig.~\ref{sm:fig:disorder}(b)) is finite we obtain a perfect test accuracy (Fig.~\ref{sm:fig:disorder}(a)). When the probability gap closes (Fig.~\ref{sm:fig:disorder}(b) gets close to zero), the prediction no longer guaranteed to be stable against disordered perturbation and the test accuracy in Fig.~\ref{sm:fig:disorder}(a) deviates from the perfect 100\%.

\section{Alternating-bond Heisenberg model}\label{sm:sec:heisenberg}
In this section, we show the two dimerzied limits in the alternating Heisenberg model belong to the same trivial phase under the time-reversal symmetry $T = \left(\prod_iX_i\right)K$. In the dimerzied limits, the ground state of the system is a product of local singlets. Since the wavefunctions are now real product states ($K$ acts trivially), we can examine how $T$ fractionalizes in the ground states by looking at how $\prod_iX_i$ fractionalizes. If we apply $\prod_i^L X_i$ to part of the system, we see that the operator acts as either $X$ or $I$ depending on whether the singlets are formed in the even or odd bonds. In both cases, the time-reversal SPT invariant is $X^2 = I^2 = +1$~\cite{schuch:2011,chen:2011}. So they belong the same trivial phase under $T$.

Next, we discuss how to connect the two dimerzied limits by a gapped symmetric Hamiltonian path.
Consider the Hamiltonian 
\begin{equation}\label{sm:eq:heisenberg}
H = (1-\lambda)\sum_i \mathbf{S}_{2i+1}\mathbf{S}_{2i+2}+\lambda \sum_i \mathbf{S}_{2i}\mathbf{S}_{2i+1},
\end{equation}
which corresponds to the case of $\Delta = 1$ in $H_4$ in the main text. The Hamiltonian has the same dimerized ground states at the limits $\lambda = 0,1$.
A transition exists at $\lambda = 1/2$. As we mention in the main text, this transition is protected by an $\pi$-rotation of the effective spin-1 (2-site unit cell) in the bulk followed by a complex conjugation. The transition can never be avoided if we keep this symmetry. 

\begin{figure}
    \centering
    \includegraphics[scale = 0.8]{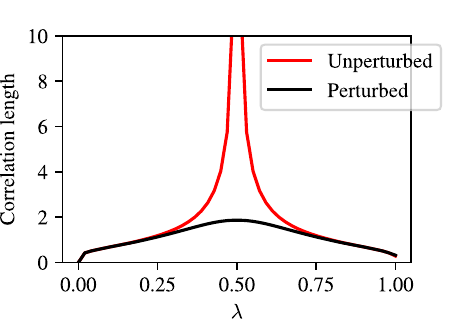}
    \caption{We verify the absence of transition in the bond-alternating Heisenberg model under time-reversal ($T$) symmetry, by examining the correlation length of the system. In the unperturbed case Eq.~\eqref{sm:eq:heisenberg}, the ground state has a diverging correlation length at $\lambda = 1/2$ (red solid line). This singular point can be avoided by adding a $T$-symmetric perturbation $\lambda(1-\lambda)\sum_i (-1)^iX_i$. The perturbed ground state continuously interpolates between the two dimerzied limits without a diverging correlation length (black solid line).}
    \label{fig:sm:heisenberg}
\end{figure}
However, we can avoid this gap-closing point by adding $T$-symmetric perturbation. A continuous path can be found by considering, e.g. $(1-\lambda)\sum_i \mathbf{S}_{2i+1}\mathbf{S}_{2i+2}+\lambda \sum_i \mathbf{S}_{2i}\mathbf{S}_{2i+1}+\lambda(1-\lambda) \sum_i (-1)^i h_i$, where $h_i = X_i$ respects the symmetry $T$. The added perturbation allows interpolation between the dimerzied ground states without gap-closing and therefore the two dimerized states can be continuously connected. The absence of a transition is verified  numerically by DMRG, as shown in Fig.~\ref{fig:sm:heisenberg}.

\section{Symmetrization of local quantum gates for generic symmetry groups}\label{sm:sec:symmetrization}
Suppose we are given an on-site symmetry of the form $U(g) = u(g)\otimes u(g)\otimes\cdots\otimes u(g)$ for $g\in G$, where $G$ is a finite group and $u(g)$ is a linear representation of $G$. An important question related to the proposed method is, how to find symmetric local quantum gates under a given (representation of the) symmetry group? A direct way for achieving this is by solving a set of linear equations (assuming the local gates have the same support as $u(g)$):
\begin{equation}
    [u(g),\sum_m c_m \hat o_m]=0,\ g\in G,\  c_m\in\mathbf{R}, \label{eq:find_sym}
\end{equation}
where $\hat o_m$ are the generators of the local unitary (usually anti-Hermitian Pauli strings). For time-reversal symmetry of the form $T = u(g)K$ with some $g\in G$ and $T^2 = 1$, we can enforce the symmetry locally and modify Eq.~\eqref{eq:find_sym} to be
\begin{equation}
    u(g)\left(\sum_m c_m \hat o_m\right)^*-\left(\sum_m c_m \hat o_m\right)u(g)=0,\  c_m\in\mathbf{R}, \label{eq:find_tsym}
\end{equation}

A convenient way to find the symmetric generators for on-site symmetry is by a twirling of the generators $\hat o_m$.
\begin{equation}
    \hat o'_m=\mathcal{T}[\hat o_m]_G = \frac{1}{|G|}\sum_{g\in G}u(g)\hat o_m u(g)^{\dag},
\end{equation}
which automatically ensures that $[\hat o'_m, u(g)] = 0$ for all $m$ and $g\in G$. This symmetrization procedure is used in Ref.~\cite{meyer:2022} to construct symmetric quantum circuit architectures.  For a time-reversal symmetry of the form $T = u(g)K$ with some $g\in G$ and $T^2 = 1$, we can define
\begin{equation}
    \hat o'_m = \hat o_m + u(g)(\hat o_m)^* u(g)^{\dag},
\end{equation}
which ensures $[\hat o'_m, u(g)K] = 0$ for all $m$.

The two procedures mentioned can be easily extended to symmetries described by compact Lie groups as well.

\section{The error probability for a majority vote}\label{sm:sec:sample_estimate}
In this section, we derive an upper bound for prediction made by a majority vote process. This provides an estimate for the sample size we need in order to make a reliable prediction in practice.

Suppose we have a discrete probability distribution with $M$ outcomes and the probabilities $(p_1,p_2,\cdots,p_M)$, such that $\sum_{i=1}^Mp_i = 1$ and  $p_i\geq 0$. The task is to find the label with the largest probability based on a majority vote among $2N+1$ independent samples from the distribution. We say a mistake is made if either (i) the majority vote cannot decide or (ii) the majority vote yields a wrong guess for the most probable label.

Without loss of generality, we sort the probability in descending order such that $p_1>p_2\geq p_3\geq \cdots\geq p_M$. The sampling process is multinomial and therefore we can write the error probability for the majority vote as
\begin{align}
    P &= \sum_{\substack{n_1+\cdots+n_M = 2N+1 \\ n_1\leq n_i,\ \text{for at least one } i\in[2, M]}}\binom{2N+1}{n_1,n_2,\cdots,n_M} p_1^{n_1}p_2^{n_2} \cdots p_M^{n_M}  
    \nonumber \\
    &\leq \sum_{\substack{n_1+\cdots+n_M = 2N+1 \\ n_1\leq n_i,\ \text{for at least one } i\in[2, M]}}\binom{2N+1}{n_1,n_2,\cdots,n_M} p_1^{n_1}p_2^{n_2+\cdots+n_M}  
    \nonumber \\
    &\leq \sum_{\substack{n_1+\cdots+n_M = 2N+1 \\ n_1\leq n_2+\cdots+n_M}}\binom{2N+1}{n_1,n_2,\cdots,n_M} p_1^{n_1}p_2^{n_2+\cdots+n_M}  \stackrel{k =
    n_2+\cdots+n_M}{=} \sum_{k = N+1}^{2N+1}\binom{2N+1}{k}p_1^{2N+1-k}p_2^k
    \nonumber \\
    &= \sum_{q = 0}^{N}\binom{2N+1}{N+1+q}p_1^{N-q}p_2^{N+1+q} 
    \nonumber \\
    &\leq p_1^{N}p_2^{N+1} \sum_{q = 0}^{N}\binom{2N+1}{N+1+q} = p_1^{N}p_2^{N+1} 2^{2N}\leq (4p_1p_2)^N.
\end{align}
In the third line above, we summed over the free multinomial indices to obtain a binomial coefficient. If we denote the probability gap by $\delta = p_1-p_2$, then we arrive at the bound
\begin{equation}
    P \leq (4p_1p_2)^N=\left((p_1+p_2)^2-\delta^2\right)^N\leq (1-\delta^2)^N.
\end{equation}
Suppose we want to ensure the error probability satisfies $P<\epsilon$ for some positive $\epsilon<1$, the sample size should at least be 
\begin{equation}
    2N+1\geq \frac{2\log\epsilon}{\log(1-\delta^2)}+1,
\end{equation}
as claimed in the main text.
\end{document}